\documentclass[a4paper,oneside,reqno]{amsart}
\usepackage{amssymb,amsfonts}

\usepackage{hyperref}

\usepackage{float}
\usepackage[capitalise,noabbrev,sort]{cleveref}
\usepackage{thmtools}

\usepackage{cite}
\numberwithin{equation}{section} 
\usepackage[super]{nth} 

\usepackage{subcaption}
\captionsetup[subfigure]{labelfont=rm}
\usepackage{graphicx}
\usepackage{cancel}
\usepackage{tikz}
\usetikzlibrary{arrows,shapes,positioning,automata}
\usetikzlibrary{decorations.markings}
\tikzstyle arrowstyle=[scale=1]
\tikzstyle directed=[postaction={decorate,decoration={markings,
    mark=at position .5 with {\arrow[arrowstyle]{stealth}}}}]
\usepackage{amsmath, mathabx, tikz-cd}
\usetikzlibrary{decorations.pathmorphing}
\usetikzlibrary{arrows.meta}
\tikzcdset{arrow style=tikz,
    squigarrow/.style={
        decoration={
        snake, 
        amplitude=.4mm,
        segment length=2mm
        }, 
        rounded corners=.2pt,
        decorate
        }
    }

\renewcommand{\epsilon}{\varepsilon}

\usepackage{multicol}

\usepackage[a4paper]{geometry}

\usepackage[only,llbracket,rrbracket,llparenthesis,rrparenthesis]{stmaryrd} 
\parskip 2ex
\parindent 0pt

\usepackage{xcolor}

\usepackage{url,amscd,epsfig,enumerate,graphicx,amsfonts,latexsym}
\usepackage{comment}

\newcommand{\alephEND}{\aleph_{\sigma(1)}}
\newcommand{\APerm}{\mathcal A_\sigma}

\newcommand{\constPerm}{k+N+3}
 \newcommand{\bi}{\begin{itemize}}
\newcommand{\ei}{\end{itemize}} \newcommand{\be}{\begin{enumerate}}
\newcommand{\ee}{\end{enumerate}} \newcommand{\bc}{\begin{center}}
\newcommand{\ec}{\end{center}} \newcommand{\bt}{\begin{tabular}}
\newcommand{\et}{\end{tabular}} 
\newcommand{\ba}{\begin{array}} \newcommand{\ea}{\end{array}}

\newcommand{\Dendrite}{Final tree}\newcommand{\dendrite}{final tree}

\usepackage{mathrsfs}

\newcommand{\MCF}{multiple context-free}

\newcommand{\dom}{\operatorname{dom}}
\newcommand{\TSG}{\operatorname{TS}(\CCC)}
\newcommand{\up}{\operatorname{up}}
\newcommand{\down}{\operatorname{down}}\newcommand{\south}{\operatorname{south}}

\newcommand{\true}{\operatorname{true}}
\newcommand{\id}{\operatorname{id}}
\newcommand{\equals}{\operatorname{eq}}

\newcommand{\set}{\operatorname{set}}
\newcommand{\push}{\operatorname{push}}

\newcommand{\CCC}{C}

\newcommand{\cA}{\mathcal A}


%

\newcommand{\g}{\gamma }

\renewcommand{\d}{\delta }

\newcommand{\s}{\sigma }

\renewcommand{\geq}{\geqslant} \renewcommand{\leq}{\leqslant}  \renewcommand{\le}{\leqslant}

\newcommand{\SigHash}[1]{\Sigma[#1]}
\newcommand{\Cspec}[1]{C_{\text{spec}}}
\newcommand{\addRoot}{\prec_{\text{addRoot}}}

\usepackage[capitalise,noabbrev,sort]{cleveref}

\newcommand{\N}{\mathbb N}
\newcommand{\Z}{\mathbb Z}
\theoremstyle{plain}
\newtheorem{theorem}{Theorem}[section]
\newtheorem{lemma}[theorem]{Lemma}
\newtheorem{proposition}[theorem]{Proposition}

\theoremstyle{definition}
\newtheorem{definition}[theorem]{Definition}

\newtheorem{example}[theorem]{Example}
\newtheorem{remark}[theorem]{Remark}

\newtheorem{assumption}[theorem]{Assumption}
\newtheorem{propx}[theorem]{Property}

\newtheorem{theoremx}{Theorem}

\usepackage{multicol}
\renewcommand{\degree}{\deg}

\title[Permutation closure for multiple context-free languages]{Permutation closure for \\multiple context-free languages}

\author[A. Duncan]{Andrew Duncan}
\address{School of Mathematics, Statistics and Physics, Newcastle University,
Newcastle upon Tyne
NE1 7RU, United Kingdom}
\email{andrew.duncan@newcastle.ac.uk }

\author[M. Elder]{Murray Elder}
\address{School of Mathematical and Physical Sciences, University of Technology Sydney, Ultimo NSW 2007, Australia}
\email{murray.elder@uts.edu.au}

\author[L. Frenkel]{Lisa Frenkel}
\email{lizzy.frenkel@gmail.com}

\author[M. Lyu]{Mengfan Lyu}
\address{School of Computer, Data and Mathematical Sciences, Western Sydney University, NSW, Australia}
\email{mengfan.lyu@westernsydney.edu.au}

\date{\today}

\subjclass[2020]{20F10,  68Q45}

\keywords{multiple context-free language;  $k$-restricted tree stack automaton; permutation closure}

\begin{document}

\begin{abstract}

 We prove  that the \emph{permutation closure} 
of a  multiple context-free language is   multiple context-free, which 
extends work of Okhotin and  Sorokin [LATA 2020] who showed closure under \emph{cyclic shift}, and 
complements work of  Brandst\"adt [1981, RAIRO Inform. Th\'{e}or.] (resp. 
 Brough \emph{et al.} [2016, Discrete Math. Theor. Comput. Sci.]) who showed  the same result for regular, context-sensitive, recursively enumerable (resp. EDT0L and ET0L) languages.
In contrast to Okhotin and  Sorokin who work with grammars, our proof uses restricted tree stack automata 
due to Denkinger [DLT 2016].

\end{abstract}

\maketitle

\section{Introduction}

Multiple context-free
  languages are a generalisation of context-free languages, introduced by 
   Seki,  Matsumura, Fujii and     Kasami        \cite{Kasami2, Seki}. Each \MCF\ language can be parameterised by a positive integer $k$, where a language is $k$-\MCF\ if it is generated by a $k$-\MCF\ grammar, or equivalently  it is accepted by a $k$-restricted tree-stack automaton  \cite{Denkinger}. Context-free languages are exactly the $1$-\MCF\ languages.

For a language $L$ and positive natural number $N$, 
Brandst\"{a}dt \cite{Brandst} considered the languages 
\[C^N(L)=\left\{w_{\sigma(1)}\cdots w_{\sigma(N)}\middle| w_1\cdots w_N\in L,\sigma\in S_N\right\}\] 
where  $S_N$ is the symmetric group on $N$ letters, 
and showed that regular, context-sensitive and recursively enumerable languages are closed under the operator $C^N$.
Brough, Ciobanu, Elder and Zetzsche 
  \cite{BroughPerm} showed that ET0L and EDT0L languages are also closed  under the operator $C^N$.
 It is clear that if the Parikh image of a language $L$ is semi-linear then so is that of the permutation closure of $L$, so having semi-linear Parikh image is closed under $C^N$.
 The operator $C^2$ is called  {\em cyclic shift}. Context-free languages are closed under cyclic shift
\cite{Maslov,Oshiba}, 
but  not closed under $C^3$ \cite{Brandst}. 

Recently Okhotin and  Sorokin \cite{cyclicShift}  showed that, for any positive integer $k$, the class of $k$-\MCF\ languages is  preserved under  cyclic shift.
Here we radically 
extend this result to show that being multiple context-free is closed under the operator $C^N$ for any $N$.

\begin{theoremx}\label{thm:Perm}Let $k, N\in\N_+$. The permutation closure $C^N(L)$ of a $k$-multiple context free language $L$ is  $(\constPerm)$-\MCF. 
\end{theoremx}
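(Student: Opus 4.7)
The plan is to work entirely with tree stack automata: fix a $k$-restricted tree stack automaton $\mathcal{A}$ accepting $L$, and build from it a $(k+N+3)$-restricted tree stack automaton $\mathcal{B}$ recognising $C^N(L)$. The input of $\mathcal{B}$ is a word $u$ which $\mathcal{B}$ must parse as a concatenation $u=u_1\cdots u_N$ with $u_j=w_{\sigma(j)}$ for some $\sigma\in S_N$ and some factorisation $w_1\cdots w_N\in L$.

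First, I would let $\mathcal{B}$ guess $\sigma\in S_N$ at the start and store it in its finite control (since $|S_N|$ is finite this is free), and nondeterministically guess the $N-1$ positions in $u$ separating the blocks $u_1,\ldots,u_N$. The core task is then for $\mathcal{B}$ to simulate the run of $\mathcal{A}$ on the unseen word $w_1\cdots w_N$, while reconciling block by block the letters demanded by this simulation with the letters actually appearing on the input tape. The key subtlety is that the tree stack run of $\mathcal{A}$ has no reason to consume $w_1,\ldots,w_N$ in the order $w_{\sigma(1)},w_{\sigma(2)},\ldots$; its reads may interleave demands for letters of different blocks arbitrarily.

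To address this I propose enlarging the stack alphabet of $\mathcal{A}$ so that each node in $\mathcal{B}$'s tree stack carries, in addition to its original $\mathcal{A}$-data, an $N$-tuple of \emph{track} slots, one per block. While $\mathcal{B}$'s head is inside $u_j$ (the segment standing for $w_{\sigma(j)}$), every letter that the simulated $\mathcal{A}$ demands from block $\sigma(j)$ is checked directly against the current input symbol, whereas every letter demanded from a different block $i\ne\sigma(j)$ is nondeterministically guessed and written to the $i$-th track of the node currently under the head. When $\mathcal{B}$'s head later crosses from $u_{j}$ into $u_{j+1}$, the entries previously recorded on the $\sigma(j+1)$-th track along the visited portion of the tree must exactly match the next prefix of the input; at the end all tracks must be empty and the simulated run of $\mathcal{A}$ must be accepting. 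Correctness in both directions then reduces to a routine nondeterministic matching argument.

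The main obstacle, and the source of the precise bound $k+N+3$, is showing that every node of $\mathcal{B}$'s tree stack is visited at most $k+N+3$ times: at most $k$ visits are inherited from the simulated run of $\mathcal{A}$, at most one additional visit per block to write or verify its track entry (contributing $+N$), and at most $3$ further visits absorb the initial installation of $\sigma$, the block-boundary transitions, and a final emptiness check. Getting this to hold at every node will require careful organisation so that track writes and verifications are folded into existing traversals wherever possible, rather than producing fresh visits; arranging the per-block verifications to proceed along the spine presently under the head, in a way compatible with the tree stack discipline, is where I expect the bulk of the technical work to lie.
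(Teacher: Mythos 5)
There is a genuine gap, and it lies in the core mechanism. Your plan is to run \emph{one} simulation of $\mathcal{A}$ on the hidden word $w_1\cdots w_N$, checking letters of the block currently under the input head against the tape and writing the guessed letters of every other block into per-node ``track'' slots, to be verified later. This fails for three reasons. First, there is a synchronisation problem: $\mathcal{A}$ consumes $w_1,\dots,w_N$ in that order, while the tape presents $w_{\sigma(1)},\dots,w_{\sigma(N)}$; if $\sigma(2)<\sigma(1)$ the simulated run would have to move \emph{backwards} to reach the point where it reads $w_{\sigma(2)}$, so a single forward simulation cannot be kept in step with the input head. Second, even if you run the whole simulation during $u_1$ and defer everything else, a single node can consume unboundedly many letters of a single block (e.g.\ via $\id$- or $\set$-loops at a fixed pointer position, which do not count as visits from below and so are not bounded by $k$), so one track slot per node per block cannot store what was guessed there. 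Third, and most fundamentally, the letters of $w_i$ are consumed in the temporal order of the run's pointer walk, which is not recoverable from static per-node annotations; ``matching the $\sigma(j{+}1)$-th track along the visited portion of the tree against the next prefix of the input'' is not a well-defined check, and any way of making it well defined amounts to replaying the block-$i$ portion of the run --- which is a different construction from the one you describe.

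The fix (and what the paper actually does) is to never guess input letters at all. Instead, first mark the block boundaries with symbols $\#_1,\dots,\#_{N+1}$ and arrange (Lemma~\ref{lem:InsertHashes}) that reading $\#_i$ deposits a distinguished ``special'' vertex recording the state and current label. Then the permuted automaton works in three phases: Phase One nondeterministically builds the \emph{entire} final tree in advance, labelling each vertex with a bounded ``history array'' that records, for each block $i$, whether the vertex is touched while $w_i$ is read and what its label is when that block finishes, together with a compass pointing towards each special vertex; Phase Two, for $j=1,\dots,N$ in tape order, navigates to the $\sigma(j)$-th special vertex and \emph{re-simulates} the portion of the run that reads $w_{\sigma(j)}$ while checking the input letters directly against the simulated transitions, using the history arrays to start from and constrain the guessed boundary labels; Phase Three does a depth-first search to confirm all guesses were consistent. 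Every datum stored per vertex is bounded, every input letter is checked against an actual transition, and the visit count is $1+(k+N+1)+1=k+N+3$. Your instinct to enlarge the vertex alphabet and to budget visits as $k$ plus a per-block overhead is the right one, but the data you store must be the vertex labels at block boundaries, not the guessed letters, and each block of the run must be replayed in the order dictated by $\sigma$ rather than verified after the fact.
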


It follows, for example, that $C^N$ of a context-free language is $(N+4)$-\MCF\ (and ET0L by \cite{BroughPerm}).

In contrast to Okhotin and  Sorokin \cite{cyclicShift} who work with grammars, our proof relies on the $k$-restricted tree stack characterisation  \cite{Denkinger} due to Denkinger. 
The proof of our main lemma (\cref{lem:PermTech}) is technical, 
 but the main idea is that one can modify a restricted tree stack automaton by enlarging the state set and vertex label alphabet, 
 to record a finite amount of additional data in each vertex label of the tree stack which can be used to simulate reading factors of a word $w_1\cdots w_N$ in an alternative order (according to a fixed permutation $\sigma\in S_N$). 

Okhotin and  Sorokin also consider the class of \emph{well-nested} $k$-\MCF\ languages, and show that    the cyclic shift  of a {well-nested} $k$-\MCF\ language is a {well-nested} $(k+1)$-\MCF\ language.

Note that
Brandst\"{a}dt's definition of 
$C^N(L)$ 
 is for a fixed value $N$. 
Suppose instead we define  \[D(L)=\{w_{\sigma(1)}\ldots w_{\sigma(N)}\mid w_1\ldots w_N\in L,\sigma\in \cup_{N=1}^\infty S_N\}.\] Then even regular languages are not closed under the operator $D$, since for example 
 $D((abc)^*)= \text{MIX}_3$, which is  $2$-multiple context-free \cite{Salvati} and not context-free.

The article is organised as follows. In \cref{sec:Prelim} we furnish the definition of $k$-restricted tree stack automaton, fix notation and state some basic results, then in \cref{sec:Perm} we prove the main result.

\section{Preliminaries}\label{sec:Prelim}

\subsection{Notation} 

We use $\N_0,\N_+$ to denote the sets of natural numbers starting at $0,1$ respectively, and write $[i,j]$ for the set of integers
$\{k\in\Z\,\mid\, i\le k\le j\}$ when  $i,j\in\Z,i\leq j$.
If $\Sigma$ is a set, we let denote $\Sigma^*$ the set of all words (finite length strings) with letters from $\Sigma$, including the \emph{empty word} $\varepsilon$ of length $0$, $\Sigma^+$ the set of positive length words with letters from $\Sigma$, and $|\Sigma|$ the cardinality of the set. 
We use $\sqcup$ to denote the disjoint union of two sets. An \emph{alphabet} is a finite set.
For any set $\Sigma$ and letter $x\not\in\Sigma$ we let $\Sigma_x$ denote $\Sigma\sqcup\{x\}$. 
 If $u,v\in \Sigma^*$ we say $v$ is a \emph{factor} of $u$ if there exist $\alpha,\beta\in\Sigma^*$ so that $u=\alpha v\beta$.

For  the grammar definition of a $k$-multiple context free language see \cite{Seki,Salvati}.
We require the following well-known facts.
\begin{proposition}[See for example \cite{Seki}]\label{prop:closure_props}
  If $\Sigma,\Gamma$ are alphabets, $L,L_1\subseteq \Sigma^*$  are $k$-multiple context-free and $R\subseteq \Sigma^*$ is a regular language
  then 
\be[(i)]
\item \label{item:1} $L\cup L_1$ is $k$-multiple context-free;
\item\label{item:2} $L\cap R$ is $k$-multiple context-free; 
\item\label{item:3}  $\psi(L), \phi^{-1}(L)$ are $k$-multiple context-free for any homomorphisms $\psi\colon\Sigma^*\to \Gamma^*, \phi\colon\Gamma^*\to \Sigma^*$.
\ee\end{proposition}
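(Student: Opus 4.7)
The plan is to work with the grammar definition of a $k$-\MCF\ grammar, in which each nonterminal $A$ has a rank $d(A)\le k$ and the rules produce $d(A)$-tuples of strings by permuting and concatenating components of tuples generated by other nonterminals.

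For (i), I would take disjoint copies of grammars for $L$ and $L_1$, introduce a fresh start symbol $S$ of rank $1$, and add two rules $S(x)\leftarrow S_G(x)$ and $S(x)\leftarrow S_{G_1}(x)$, where $S_G$ and $S_{G_1}$ are the old start symbols. For (ii), let $A=(Q,q_0,F,\delta)$ be a DFA recognising $R$. I would perform a product construction: for each nonterminal $X$ of rank $r$, introduce refined nonterminals $X_{\vec p,\vec q}$ indexed by $\vec p,\vec q\in Q^r$, whose derived tuples $(w_1,\dots,w_r)$ are those generated by $X$ for which $A$ runs from $p_i$ to $q_i$ on $w_i$. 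Each original rule is replaced by all of its consistent refinements, in which state labels on adjacent substrings, and on the terminals between them, must agree with $\delta$. The new start symbol is obtained by using (i) to union the refined start nonterminals $S_{q_0,f}$ over $f\in F$. Since $r\le k$ and $|Q|$ is finite, the refined grammar is still $k$-\MCF.

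For (iii), the image $\psi(L)$ is obtained by substituting $\psi(a)\in\Gamma^*$ for each occurrence of a terminal $a\in\Sigma$ in each rule, leaving ranks unchanged. For the preimage $\phi^{-1}(L)$, the cleanest route is via the tree-stack automaton characterisation of Denkinger. From a $k$-restricted tree-stack automaton $M$ for $L$, I would construct $M'$ reading $\Gamma^*$ whose states are pairs $(q,u)$ with $q$ a state of $M$ and $u$ a suffix of some $\phi(b)$ for $b\in\Gamma$. On input letter $b\in\Gamma$ in state $(q,\ve)$, $M'$ consumes $b$ and silently moves to state $(q,\phi(b))$ without touching the tree stack; in state $(q,au)$ with $a\in\Sigma$, $M'$ performs the transitions of $M$ on input $a$, updating the tree stack and passing to state $(q',u)$ without advancing the input head. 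Acceptance corresponds to $M$ accepting with empty buffer.

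The main obstacle I anticipate is (ii) and the preimage part of (iii): in both cases one must check that the added bookkeeping (state pairs per component, respectively buffer suffixes stored inside the control) does not disturb the rank bound or the $k$-restriction on the tree stack. In each case only the finite control grows, while the structural parameter controlling $k$ is left untouched, so these verifications should be routine.
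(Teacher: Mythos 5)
The paper does not actually prove this proposition---it is stated as a known fact with a citation to Seki et al.---and your constructions are exactly the standard ones underlying that reference: disjoint union of grammars with a fresh rank-$1$ start symbol, the DFA product construction threading a pair of states through each tuple component (which leaves the rank, hence $k$, untouched), terminal substitution for forward homomorphisms, and an input buffer in the finite control for inverse homomorphisms. All four arguments are correct; the only detail worth adding is that in the inverse-homomorphism simulation the automaton $M'$ must also be permitted to perform the $\varepsilon$-transitions of $M$ in an arbitrary buffer state (not only when consuming a buffered letter $a$), and acceptance should require both a final state of $M$ and an empty buffer, after which preservation of the $k$-restriction is immediate because $M'$ executes exactly the tree-stack instructions of $M$ on $\phi(w)$.
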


\subsection{Tree stack automata}\label{subsec:TSA}

We recall the definitions of trees and tree stacks, following \cite{Substitution} (see also \cite{Denkinger,KS}).

\begin{definition}[Tree with labels in $\CCC$]
\label{defn:TreeWithLabels}
Let  $\CCC$ be an alphabet and $@\not\in\CCC$.
  Let $\xi$ be a  partial function from $\N_+^*$ to $\CCC_@$ ($=C\sqcup\{@\}$) with a non-empty, finite and prefix-closed domain denoted $\dom(\xi)$ such that $\xi(\nu)=@$ if and only if $\nu=\varepsilon$. 
  Then $\xi$ defines a labelled  rooted tree $T$ with vertices $V(T)=\dom(\xi)$,  where $\nu\in V(T)$ is labelled $\xi(\nu)$,  root $\varepsilon$,
  and edges $E(T)=\{\{\nu,\nu n\} \mid \nu n \in \dom(\xi)\}$. 
  \end{definition}

We will occasionally abuse notation and write $\xi=T$.
Note that $\dom(\xi)$ being non-empty and prefix-closed ensures that every tree with labels in $\CCC$ includes a root vertex $\epsilon$, and every vertex is connected to the root vertex $\epsilon$ by an edge path.

\begin{definition}[Below/above, child/parent] \label{defn:below-above}
In a rooted tree $T$ with vertex $\nu$, we say a vertex  is \emph{below} $\nu$ if it lies in the connected component of $T\setminus \{\nu\}$ containing the root, and \emph{above} $\nu$ if it is not below (so $\nu$ itself is considered to be \emph{above} $\nu$ by this definition).  Equivalently, $\mu$ is below $\nu$ if $\mu$ is not a prefix of $\nu$, and above $\nu$  if $\nu$ is a prefix of $\mu$. 
If $\nu n\in \dom(\xi)$ for some $n\in N_+$ then $\nu n$ is above $\nu$ by definition. As standard, we call $\nu n$ a
 \emph{child} of $\nu$, and say $\nu$ is the \emph{parent} of $\nu n$. \end{definition}

Note  that no vertex is below the root by this definition, and 
 that there is no requirement in the definition of a tree that a vertex  $\rho$ with a child $\rho n$, $n>1$, also has a child $\rho m$ for $m\in[1,n-1]$.

  \begin{definition}[Unlabelled tree]
  \label{defn:UnlabelledTree}
Let  $T$ be the labelled rooted tree defined by $\xi\colon \N_+^*\to\CCC_@$ as in \cref{defn:TreeWithLabels}.  Define the \emph{unlabelled  tree}  associated to $T$,
denoted by $\overline T$, as the tree with unlabelled vertices $V(\overline T)=\dom(\xi)$ and edges $E(\overline T)=E(T)$.
In this case we say $T$ \emph{has unlabelled tree} $\overline T$.
\end{definition}

\begin{definition}[Tree stack over $\CCC$]\label{def:treeStack} Let  $\xi\colon \N_+^*\to\CCC_@$ be a tree with labels,
and  $\rho\in \dom(\xi)$. The pair $(\xi,\rho)$ is called a 
 \emph{tree stack}.
The set of all tree stacks over $\CCC$ is denoted by $\TSG$.
\end{definition}
We refer to the address $\rho$ as the \emph{pointer} of the tree stack, and visualise the tree stack as a rooted tree with root at address $\varepsilon$ labeled $@$ and all other vertices labeled by letters from $\CCC$, with the {pointer} $\rho$ indicated by an arrow pointing to  the vertex at address $\rho$.

 \cref{fig:TreeStackEG} gives an example of \cref{def:treeStack} (figure (a)) and \cref{defn:UnlabelledTree} (figure (b)).

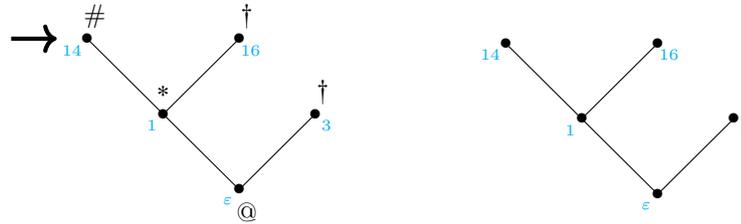
\begin{figure}[h!]   
\centering
\begin{subfigure}{.4\textwidth}
\begin{center}
\begin{tikzpicture}[scale=1]
\draw[decorate]  (0,2) --  (1,1) -- (2,0) -- (3,1);
\draw[decorate] (1,1) -- (2,2);

\draw (0,2) node {$\bullet$};
\draw (1,1) node {$\bullet$};
\draw (2,2) node {$\bullet$};
\draw (2,0) node {$\bullet$};
\draw (3,1) node {$\bullet$};

\draw (2.1,-0.3) node {$@$};
\draw (0.1,2.3) node {$\#$};
\draw (2.1,2.3) node {$\dag$};
\draw (3.1,1.3) node {$\dag$};
\draw (1,1.3) node {$\ast$};

\draw[decorate,ultra thick,->] (-1,2) -- (-0.4,2);
\draw (-0.2,1.85) node {\color{cyan}\tiny$14$};
\draw (2.15,1.85) node {\color{cyan}\tiny$16$};
\draw (0.85,0.85) node {\color{cyan}\tiny$1$};
\draw (3.15,0.85) node {\color{cyan}\tiny$3$};
\draw (1.85,-0.15) node {\color{cyan}\tiny$\varepsilon$};
\end{tikzpicture}\end{center}
\caption{Tree stack $(\xi, 14)$  with $T=\xi$  defined by  $\xi\colon \varepsilon\mapsto @, 1 \mapsto \ast, 14\mapsto \#, 16\mapsto \dag, 3\mapsto \dag$ and pointer at $12$. Here $\dom(\xi)=\{\varepsilon, 1,14, 16, 3\}$ and $\{\ast,  \#, \dag\}\subseteq C$. 
\label{fig:TreeStackEGleft}}
\end{subfigure}
\begin{subfigure}{0.05\textwidth} \phantom{.}
\end{subfigure}
\begin{subfigure}{0.4\textwidth}
\begin{tikzpicture}[scale=1]
\draw[decorate]  (0,2) --  (1,1) -- (2,0) -- (3,1);
\draw[decorate] (1,1) -- (2,2);

\draw (0,2) node {$\bullet$};
\draw (1,1) node {$\bullet$};
\draw (2,2) node {$\bullet$};
\draw (2,0) node {$\bullet$};
\draw (3,1) node {$\bullet$};

\draw (2.1,-0.3) node {\phantom{$@$}};

\draw (-0.2,1.85) node {\color{cyan}\tiny$14$};
\draw (2.15,1.85) node {\color{cyan}\tiny$16$};
\draw (0.85,0.85) node {\color{cyan}\tiny$1$};
\draw (3.15,0.85) node {\color{cyan}\tiny$3$};
\draw (1.85,-0.15) node {\color{cyan}\tiny$\varepsilon$};
\end{tikzpicture}
\caption{Unlabelled tree $\overline T$ associated to $T$. Note that the small digits drawn next to each vertex are addresses rather than labels.
\label{fig:TreeStackEGleft}}
\end{subfigure}

\caption{Tree stack and associated unlabelled tree, as in  \cref{def:treeStack,defn:UnlabelledTree}. Note the vertex $16$ is below the vertex $3$ according to \cref{defn:below-above}.
\label{fig:TreeStackEG}}
\end{figure}

To define a tree stack automaton, we 
define the following predicates and {(partial)} functions. 
The {predicates} are used to check if the pointer of a tree stack has a certain label, namely: 
\be \item $\equals(c)=\{(\xi, \rho) \in \TSG \mid \xi(\rho)=c\}$ where $c \in \CCC_@$ (so $(\xi, \rho)\in  \equals(c)$, or $\equals(c)$ is true for the tree stack  $(\xi, \rho)$, if the label of $\rho$ is $c$)
\item $\true$ 
 (for any tree stack regardless of the current label pointed to);

\ee

and the 
following (partial) {functions} enable 
 both construction of and movement around a tree: 
  \be \item $\id\colon \TSG \rightarrow \TSG$ where $\id(\xi, \rho)=(\xi, \rho)$ for every $(\xi, \rho) \in \TSG$. This is a function that makes no change to the tree stack
\item  For each $(n,c)\in \N_+\times \CCC$ let  $\push_n(c) \colon \TSG  \rightarrow \TSG$  be the
  partial function defined,  whenever $\rho n$ is not an address in $\dom(\xi)$, 
  by $\push_n(c)((\xi, \rho))=(\xi', \rho n)$, 
  where $\xi'$ is the partial function defined as 
\[\xi'(\nu)=\begin{cases}\xi(\nu), & \nu\in\dom(\xi)\\
  c, & \nu=\rho n\end{cases}.\]
Thus $\push_n(c)$  adds a new vertex $\rho n$ labelled by $c$, a new edge from $\rho$ to $\rho n$ and moves the pointer from $\rho$ to $\rho n$,  provided $\rho n$ was not already a vertex of $\xi$
\item   For each $n\in \N_+$ let $\up_n \colon \TSG \rightarrow \TSG$ be 
  the partial function defined,  whenever  $\rho n \in \dom(\xi)$,  by $\up_n((\xi, \rho))=(\xi, \rho n)$.
  So $\up_n$ moves the pointer of a tree stack to an existing child address
\item   $\down \colon \TSG \rightarrow \TSG$ is a partial function defined as $\down(\xi, \rho n)=(\xi, \rho)$ (so down  moves the pointer from a vertex to its parent and may be  applied to any tree stack unless its
  pointer is already pointing to the root, which has no parent
\item  For each $c\in \CCC$ let $\set(c) \colon \TSG  \rightarrow \TSG$  be the function defined,
  whenever $\rho\neq \varepsilon$, by  $\set(c)( (\xi,\rho))=(\xi',\rho)$, where $\xi'$ is the partial function  defined as 
\[\xi'(\nu)=\begin{cases}\xi(\nu), & \nu\in\dom(\xi)\setminus\{\rho\}\\
    c, & \nu=\rho\end{cases}.\]
Thus $\set(c)$ relabels the vertex pointed to by the pointer with the letter $c\in \CCC$ (and may be  applied to any tree stack except $(\xi,\varepsilon)$). 
\ee

Note that none of the partial functions $\id$, $\push_n(c)$, $\up_n$, $\down$ or $\set(c)$ remove any vertices from a tree stack, and only $\push_n(c)$ adds a vertex.

\begin{definition}[Tree stack automata]
A \emph{tree stack automaton (TSA)} is a tuple \[\mathcal{A}=(Q,\CCC, \Sigma, q_0,  \delta, Q_f)\] where $Q$ is a finite  \emph{state set}, $\CCC_@$ is an alphabet of \emph{tree-labels} (with $@ \notin \CCC$), $\Sigma$ is the alphabet of \emph{terminals}, $q_0\in Q$ is the \emph{initial state},  
 $Q_f\subseteq Q$ is the subset of \emph{final states}, and $\d = \{\s_1,\dots, \s_{|\delta|}\}$
  is a finite set of \emph{transition rules} of  the form $\s_i=(q,x,p,f,q')$ where 
 \be\item  $q,q'\in Q$ are the \emph{source} and \emph{target} states of $\s_i$,
 \item  $x\in\Sigma_\varepsilon$ is the \emph{input letter} for $\s_i$, 
 \item $p$ is the predicate of $\s_i$ (i.e. either $\equals(c)$, for some $c\in\CCC_@$, or $\true$),
 \item $f$ is the \emph{instruction} of $\s_i$, so is one of the functions $\id$, $\push_n(c)$, $\up_n$, $\down$, or $\set(c)$ for $c\in\CCC, n\in \N_+$.
 \ee
\end{definition}
Note that each coordinate of $\mathcal{A}$ has a finite description.
The tree stack automaton operates by starting in state $q_0$ with a tree stack initialised at $(\{(\varepsilon,@)\},\varepsilon)$,
reading each letter of an input word $w\in\Sigma^*$ with arbitrarily many $\varepsilon$ letters interspersed, applying transition rules from $\delta$, if they are allowed, (eg. $\push_n$ unless $\rho n$ is already a vertex;  the target state of the applied transition should also coincide with the source state of the following one). In other words, a composition of transitions is allowed if it determines
a well defined function which consumes input letters whilst moving between tree stacks. 

\begin{definition}[Run; accepted word; language of a TSA]
An allowed  sequence of transitions  $\tau_1\cdots \tau_r\in\delta^*$ is called a \emph{run} of the automaton. 
A run is \emph{valid} 
if it
finishes in a state in $Q_f$; if $w\in \Sigma^*$ is the input word consumed during a valid run we say that $\mathcal A$ \emph{accepts} 
$w$ (the definition does not \emph{a priori}
 put a requirement on the final position of the pointer, but see \cref{rmk:finish-root}).
Define the language $L(\mathcal A)$ to be the set of all words $w\in\Sigma^*$ accepted by $\mathcal A$.
 \end{definition}

\begin{example}\label{eg:ABCDMN} (See \cite[Example 3.3]{Denkinger}).
Let $Q=\{q_0, q_1, \ldots, q_9\}, \Sigma=\{a,b,c,d\}$ and $\CCC=\{*, \#\}$. Consider the TSA
$$
\mathcal{A}=(Q, \CCC, \Sigma, q_0, \delta, \{q_9\})
$$
where $\delta$ consists of the transitions
\begin{equation*}
\begin{array}{lll}
\s_1=(q_0, a, \equals(@), \push_1(*), q_1),     &\s_7=(q_3, \varepsilon, \true , \push_1(\#), q_4),    &\s_{13}=(q_6, \varepsilon, \equals(@), \up_2, q_7),\\
\s_2=(q_1, a , \true , \push_1(*), q_1),  &\s_8=(q_4, \varepsilon, \true , \down,  q_4),        &\s_{14}=(q_7, d, \equals(*) , \up_1, q_7),\\
\s_3=(q_1, \varepsilon, \true , \push_1(\#), q_2),  &\s_9=(q_4, \varepsilon,  \equals(@), \up_1, q_5),           &\s_{15}=(q_7, \varepsilon, \equals(\#) , \down, q_8),\\
\s_4=(q_2, \varepsilon, \true, \down, q_2), &\s_{10}=(q_5, c, \equals(*), \up_1, q_5),                                  &\s_{16}=(q_8, \varepsilon, \equals(*) , \down, q_8),\\
\s_5=(q_2, b, \equals(@) , \push_2(*), q_3), &\s_{11}=(q_5, \varepsilon, \equals(\#) , \down, q_6),                   &\s_{17}=(q_8, \varepsilon, \equals(@) , \id, q_9).\\
\s_6=(q_3, b, \true , \push_1(*), q_3),                  &\s_{12}=(q_6, \varepsilon, \equals(*) , \down, q_6),\\              
\end{array}
\end{equation*}

depicted by the labelled directed graph in \cref{fig:EGabcdAUTOM}.

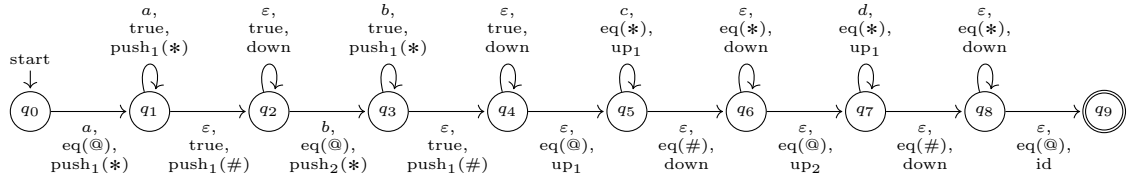
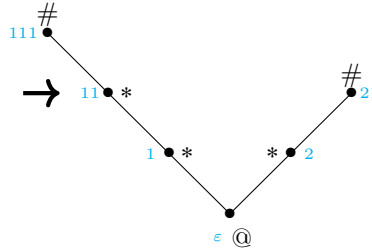
\begin{figure}[h]
\centering
\begin{subfigure}{1.0\textwidth}
\begin{center}
\begin{tikzpicture}[shorten >=1pt,node distance=1.57cm,on grid,auto] \tiny
 
 \node[state,initial, initial where=above, minimum size=0.5cm]  (1)                      {$q_0$};
  \node[state, minimum size=0.5cm]          (2) [right =of 1] {$q_1$};
  \node[state, minimum size=0.5cm]          (3) [right=of 2] {$q_2$};
    \node[state, minimum size=0.5cm]          (4) [right=of 3] {$q_3$};
    \node[state, minimum size=0.5cm]          (5) [right=of 4] {$q_4$};
    \node[state, minimum size=0.5cm]          (6) [right=of 5] {$q_5$};
    \node[state, minimum size=0.5cm]          (7) [right=of 6] {$q_6$};
    \node[state, minimum size=0.5cm]          (8) [right=of 7] {$q_7$};
    \node[state, minimum size=0.5cm]          (9) [right=of 8] {$q_8$};
  \node[state, minimum size=0.5cm, accepting](10) [right=of 9] {$q_9$};

  \path[->] (1) edge              node [below]      { $\begin{array}{c} a, \\ \equals(@) ,\\ \push_1(*)\end{array}$} (2)
                    
            (2) edge              node  [below]      {$\begin{array}{c}\varepsilon,\\ \true, \\ \push_1(\#)\end{array}$} (3)
                      
                           edge [loop above] node        {$\begin{array}{c}a,\\ \true ,\\\push_1(*) \end{array}$} ()
          
            (3) edge              node [below] {$ \begin{array}{c}b, \\ \equals(@), \\ \push_2(*)\end{array}$} (4)

                  edge [loop above] node    {$\begin{array}{c}\varepsilon,\\ \true, \\ \down \end{array}$} ()
                              
			(4) edge              node [below] {$ \begin{array}{c}\varepsilon, \\ \true, \\ \push_1(\#)\end{array}$} (5)

                  edge [loop above] node    {$\begin{array}{c} b,\\ \true, \\ \push_1(*) \end{array}$} ()
                     
            (5) edge              node  [below]      {$\begin{array}{c}\varepsilon,\\ \equals(@), \\ \up_1 \end{array}$} (6)
                      
                           edge [loop above] node        {$\begin{array}{c}\varepsilon,\\ \true ,\\ \down \end{array}$} ()
                
     		(6) edge              node  [below]      {$\begin{array}{c}\varepsilon,\\ \equals(\#), \\ \down \end{array}$} (7)
                      
                           edge [loop above] node        {$\begin{array}{c}c,\\ \equals(*) ,\\ \up_1 \end{array}$} ()
        
       		(7) edge              node  [below]      {$\begin{array}{c}\varepsilon,\\ \equals(@), \\ \up_2 \end{array}$} (8)
                      
                           edge [loop above] node        {$\begin{array}{c} \varepsilon,\\ \equals(*) ,\\ \down \end{array}$} ()
                                                
			(8) edge              node  [below]      {$\begin{array}{c}\varepsilon,\\ \equals(\#), \\ \down \end{array}$} (9)
                      
                           edge [loop above] node        {$\begin{array}{c} d,\\ \equals(*) ,\\ \up_1 \end{array}$} () 
                              
            (9) edge              node  [below] {$\begin{array}{c} \varepsilon, \\ \equals(@), \\\id \end{array}  $} (10) 
                  edge [loop above] node        {$\begin{array}{c} \varepsilon,\\ \equals(*), \\\down \end{array}$} () 
                    ;

\end{tikzpicture}
\end{center}
\caption{Finite state control encoding $\mathcal{A}$ in \cref{eg:ABCDMN}. 
 Each directed edge corresponds to a transition $\s_i\in\delta$.
\label{fig:EGabcdAUTOM}}

\end{subfigure}
\begin{subfigure}{0.45\textwidth}
\begin{center}
\begin{tikzpicture}[scale=0.8]
\draw[decorate]  (0,3) --  (1,2) -- (2,1) -- (3,0) -- (4,1) -- (5,2);

\draw (0,3) node {$\bullet$};
\draw (1,2) node {$\bullet$};
\draw (2,1) node {$\bullet$};
\draw (3,0) node {$\bullet$};
\draw (4,1) node {$\bullet$};
\draw (5,2) node {$\bullet$};

\draw (3.2,-0.4) node {$@$};
\draw (0.0,3.3) node {$\#$};
\draw (5.0,2.3) node {$\#$};
\draw (1.3,2.0) node {$\ast$};
\draw (2.3,1.0) node {$\ast$};
\draw (3.7,1.0) node {$\ast$};

\draw[decorate,ultra thick,->] (-0.4,2) -- (0.2,2);

\draw (-0.4,3.0) node {\color{cyan}\tiny$111$};
\draw (5.3,2.0) node {\color{cyan}\tiny$21$};
\draw (0.7,2.0) node {\color{cyan}\tiny$11$};
\draw (1.7,1.0) node {\color{cyan}\tiny$1$};
\draw (4.3,1.0) node {\color{cyan}\tiny$2$};
\draw (2.8,-0.4) node {\color{cyan}\tiny$\varepsilon$};
\end{tikzpicture}
\end{center}
\caption{Tree stack with pointer at $\rho=11$ after performing $\s_1 \s_2\s_3\s_4^3\s_5\s_7\s_8^2 \s_9 \s_{10}$ reading $a^2 b c$. \label{fig:EGabcd_FinalTree}}
\end{subfigure}

\caption{The finite state control $\mathcal{A}$ accepting $a^mb^nc^md^n$, $m,n \geq 0$; and a tree stack constructed whilst reading $a^2bc$ in \cref{eg:ABCDMN} (on the way to accepting $a^2bc^2d$).}
\label{fig:EGabcd}
\end{figure}

As shown in \cite{Denkinger}, $L(\mathcal{A}) = \{a^nb^mc^nd^m|m,n\in \N_+\}$. For instance, one may verify that  the TSA $\mathcal{A}$ accepts $a^2bc^2d$ by using the 
 run 
   \[\mathcal R= \s_1 \s_2 \s_3 \s_4^3 \s_5 \s_7 \s_8^2 \s_9 \s_{10}^2 \s_{11} \s_{12}^2 \s_{13} \s_{14} \s_{15} \s_{16} \s_{17}.\] 
\end{example}

\begin{definition}[\Dendrite]
Let $\mathcal A$ be a TSA, $\mathcal R$ a run of $\mathcal A$ accepting a word $w$, and  $(T_0,\varepsilon)$ the tree stack with pointer at the root after $\mathcal R$ has been performed. Call $T_0$ the \emph{\dendrite} corresponding to $\mathcal R$.
\end{definition}
For example, the \dendrite\ corresponding to the run  \begin{equation}\label{eq:runEG}
\mathcal R= \s_1 \s_2^{p-1} \s_3 \s_4^{p+1} \s_5 \s_6^{q-1} \s_7 \s_8^{q+1} \s_9 \s_{10}^p \s_{11} \s_{12}^p \s_{13} \s_{14}^q \s_{15} \s_{16}^q \s_{17}\end{equation} accepting $a^pb^qc^pd^q$ for $p,q\in\N_+$ 
in \cref{eg:ABCDMN} will be two path-graphs (of length $p+1$ and $q+1$) joined at a root vertex,  such as that drawn for $p=2,q=1$ in \cref{fig:EGabcd_FinalTree}.

The next restriction will put a bound on the number of times the pointer can move a step in the direction away from the root (by a $\push_n$ or $\up_n$ instruction) at a given vertex.

\begin{definition}[visited from below]
Let $\nu'\in \N_+^*$, $n\in\N_+$, $\nu=\nu' n$, and suppose that some run $\tau_1 \cdots \tau_r$  of a TSA  $\mathcal A$ 
has \dendrite\ containing a vertex $\nu$. We say that $\nu$ is 
\emph{visited from below} by $\tau_j$
if $\tau_j=(q,x,p,\push_n(c),q')$, $c\in\CCC$ or  $\tau_j=(q,x,p,\operatorname{up}_n,q')$ is applied when  the pointer is at $\nu'$.
\end{definition}
\begin{definition}[$k$-restricted]
 We say the run  $\tau_1 \cdots \tau_r$ is \emph{$k$-restricted} if for each $\nu\in \N_+^+$ the number of 
$j\in[1,r]$ such  that 
$\nu$ is visited from below by $\tau_j$ is at most $k$. We say that 
a TSA $\mathcal A$ is \emph{$k$-restricted} if for any word $w\in L(\mathcal A)$
there is an accepting run that is $k$-restricted.\end{definition}
For example, one may verify that the automaton in \cref{eg:ABCDMN} accepting $\{a^mb^nc^md^n|m,n \in\N_0\}$ is 2-restricted as follows: the word $a^p b^q c^p d^q$ is accepted by the  run given by \cref{eq:runEG}.
For this run, each vertex address $1^i$ for $i\in[1,p+1]$  is visited from below once with a $\s_2$ or $\s_3$ 
($\push_1$ transition), once with a $\s_9$ or $\s_{10}$ ($\up_1$ transition), while each vertex address $21^j$ for $j=0$  is visited from below once with $\s_5$ ($\push_2$ transition) and once with $\s_{13}$ ($\up_2$ transition), and for $j\in[1,q]$  is visited from below once with a $\s_6$ or $\s_7$ ($\push_1$ transition) and another time with $\s_{14}$ ($\up_1$ transition).  
 The language  $\{a^mb^nc^md^n\mid n\in\N_0\}$
 can also be generated by a $2$-\MCF\ grammar (see for example \cite[Example 3.3]{Denkinger}).

We now state the characterisation of $k$-\MCF\ languages 
 due to  Denkinger.
\begin{theorem}[Theorem 4.12, \cite{Denkinger}]$\label{equ}$
Fix an alphabet $\Sigma$. Let $L \subseteq \Sigma^*$ and $k \in \N_{+}$. The following are equivalent:
\be\item $L$ is $k$-\MCF\ (\emph{i.e.} there is a $k$-multiple context-free grammar ${G}$ which generates  $L$)
\item 
there is a $k$-restricted tree stack automaton $\mathcal{A}$ which recognises  $L$.\ee
\end{theorem}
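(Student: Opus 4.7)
The plan is to prove the two directions $(1) \Rightarrow (2)$ and $(2) \Rightarrow (1)$ separately, with $(1) \Rightarrow (2)$ being a relatively direct simulation and $(2) \Rightarrow (1)$ requiring a more delicate extraction argument.

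For $(1) \Rightarrow (2)$, given a $k$-\MCF\ grammar $G$ in which every nonterminal $A$ has fan-out $\dim(A) \le k$, I would construct a TSA $\cA_G$ whose accepting runs mirror derivations of $G$. The tree stack is used to encode the derivation tree of $G$: whenever a rule with head a nonterminal $A$ is applied and introduces a child nonterminal $B$, the pointer performs a $\push_n$ to create a new child vertex whose label records the chosen rule at that node of the derivation tree. The state of $\cA_G$ carries both the current nonterminal and an index $i \in [1, \dim(A)]$ indicating which of the $\dim(A)$ components of $A$ is currently being produced. To output the $i$-th component, the pointer descends into the relevant child subtrees in the order prescribed by the yield function of the rule at $A$; subsequent components of a child $B$ are produced by returning to $B$'s vertex later via $\up_n$. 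Because each child vertex is re-entered from below exactly once per component of its nonterminal and $\dim(B) \le k$, no non-root vertex is visited from below more than $k$ times, so $\cA_G$ is $k$-restricted.

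For $(2) \Rightarrow (1)$, given a $k$-restricted TSA $\cA = (Q, \CCC, \Sigma, q_0, \delta, Q_f)$, I would build a grammar $G_{\cA}$ whose nonterminals record a \emph{visit profile} at each vertex. For any vertex $\nu$ of the \dendrite\ of an accepting run, the pointer enters the subtree above $\nu$ at most $k$ times (this is exactly the $k$-restriction), consuming word fragments $u_1, \ldots, u_\ell$ with $\ell \le k$, one per visit. The nonterminal $A_\nu$ is indexed by a tuple of length $\ell$ of (entry state, exit state) pairs over $Q$ together with, at the root, the initial and a final state; it derives the tuple $(u_1, \ldots, u_\ell)$. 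Rules for $A_\nu$ are built from a finite description of the local behaviour at $\nu$: sequences of transitions applied when the pointer sits at $\nu$, interleaved with excursions into children $\nu 1, \nu 2, \ldots$; the yield function of each rule records how the child fragments are concatenated inside each of the $\ell$ top-level visits to $\nu$.

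The main obstacle lies in this backward direction. One must verify: (a) the set of possible visit profiles is finite, which is immediate because a profile is a tuple of length $\le k$ of pairs from the finite set $Q$; (b) only finitely many rules per nonterminal are needed, for which a pumping-style argument is required to show that arbitrarily long sequences of $\varepsilon$-transitions applied at a single vertex can be shortened without altering the input consumed; and (c) the yield functions can be defined so that they correctly interleave the tuples from children within each top-level visit to $\nu$, respecting the total order in which those excursions occurred in the run. Correctness is then established by induction on the \dendrite: $w \in L(\cA)$ if and only if $w$ is derivable in $G_{\cA}$ from the start symbol corresponding to the root's visit profile. By construction the fan-out of every nonterminal is at most $k$, so $G_{\cA}$ is indeed a $k$-\MCF\ grammar.
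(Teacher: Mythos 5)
The paper does not prove this statement: it is quoted as Theorem 4.12 of Denkinger [DLT 2016] and used as a black box, so there is no in-paper proof to compare your attempt against.

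Judged on its own terms, your outline of the forward direction $(1)\Rightarrow(2)$ matches the standard construction (encode the derivation tree in the tree stack; one visit from below per component of a nonterminal, with fan-out at most $k$, gives the $k$-restriction). The backward direction, however, has a concrete gap at your step (b). Shortening $\varepsilon$-loops does not suffice: a TSA may also apply unboundedly many \emph{input-consuming} transitions (with instruction $\id$ or $\set(c)$) while the pointer sits at a single vertex $\nu$ between two excursions into children, and the word fragment read during such a block is unbounded in length, so it cannot appear literally as a terminal string in the yield function of a single rule. Hence ``finitely many rules per nonterminal'' does not follow from your pumping argument. The standard repair is to note that the set of fragments readable at $\nu$ between two consecutive excursions, with prescribed entry and exit states and vertex labels, is a regular language, and to generate each such block by an auxiliary fan-out-$1$ nonterminal (equivalently, to first normalise the automaton so that only boundedly many transitions are applied per visit). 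You should also make explicit that the number of excursions from $\nu$ is bounded by $k\cdot\degree(\mathcal A)$, which is what bounds the rank of the rules. With those fixes, and a precise induction for your step (c), the plan is the right one.
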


A non-empty run of the TSA $\mathcal A$ has the form $\tau_1 \cdots \tau_r$, for some transitions $\tau_j \in \delta, j \in [1,r].$
Denote by $(\xi_j,\rho_j)$ the tree stack after performing $\tau_1\cdots \tau_j$,  
and $(\xi_0,\rho_0)$  the tree stack before any transitions are performed ($\dom(\xi_0)=\{\epsilon\}$, $\xi_0(\epsilon)=@$ and $\rho_0=\epsilon$).

\begin{definition}[Degree]
\label{defn:degree}
If $\delta$ is the set of  transition rules for a TSA $\mathcal A$, let $\Delta=\{i\in \N_+ \mid \exists c\in C, \push_i(c)\in \delta\}$. This means that in any tree stack $(\xi,\rho)$ built  by $\mathcal A$ whilst reading any input, $\dom(\xi)\subseteq \Delta^*$, so the out-degree of  any vertex of $\xi$ is at most
$|\Delta|$.
 Define the 
 \emph{degree} of $\mathcal A$ to be $\degree(\mathcal A)=|\Delta|$.
\end{definition}
Since $\delta$ is finite, $\degree(\mathcal A)$ is a non-negative integer. 
Recall (\cref{defn:TreeWithLabels})
that we do not require  vertices $\nu n$ in the final tree of a run to have a sibling $\nu (n-1)$, so it is possible that $\Delta$ contains integers larger than  $\degree(\mathcal A)$. We make the following assumption.

  \begin{assumption}
  \label{assumption:DegreeSharp}
 Without loss of generality 
 we may assume \[\max\{i\in \N_+ \mid \exists c\in C, \push_i(c)\in \delta\}=\degree(\mathcal A).\] \end{assumption}
If \cref{assumption:DegreeSharp} does not hold, there exists $j\geq 2$ so that $\mathcal A$ has an instruction $\push_{j}(c)$ for some $c\in\CCC$ but no instruction $\push_{j-1}(c')$ for any $c'\in\CCC$. 
Then replacing all rules $\push_j,\up_j$ in $\delta$ 
by $\push_{j-1},\up_{j-1}$ neither changes the language accepted nor affects the number of times a vertex is visited from below. We can then repeat until \cref{assumption:DegreeSharp}  is satisfied.

We make the following additional convention
 for tree stack automata appearing in this paper.

  \begin{assumption}[Pointer returns to the root]\label{rmk:finish-root}
 Without loss of generality (modifying the 
 TSA description by adding from all accept states additional $\down$ transitions to a new single accept state  if necessary) we may assume $\mathcal A$ accepts inputs only if it finishes in a state in $Q_f$ with the pointer pointing to the root. 
Note that adding $\down$ transitions does not affect the number of times a vertex is visited from below by a run.
 \end{assumption}

For example, 
the $2$-TSA given in \cref{eg:ABCDMN} accepts only when the pointer is at the root,  and satisfies \cref{assumption:DegreeSharp} with $\deg(\mathcal A)=2$.

\section{Proof of \cref{thm:Perm}}
\label{sec:Perm}

\cref{thm:Perm} follows from the next two lemmas 
and \cref{prop:closure_props} \eqref{item:1}.
For  $\Sigma$ an alphabet and  $s\in\N$,  let $\SigHash{s}=\Sigma\sqcup\{\#_1,\dots, \#_{s+1}\}$ where each letter $\#_i$ is distinct and not in $\Sigma$.

\begin{lemma}\label{lem:InsertHashes} Let $k,N\in \N_+$.
If $L\subseteq \Sigma^*$ is $k$-\MCF, then so is
 \[L_N=\left\{
\#_1w_{1}\#_2w_{2}\#_3\cdots\#_N w_N\#_{N+1}\middle| w_1w_2\cdots w_N\in L\right\}\subseteq (\SigHash{N})^*.
\]

Moreover, there is a $k$-restricted TSA  $\mathcal A =(Q,\CCC, \Sigma, q_0,  \delta, \{q_f\})$    with $\deg(\mathcal A)=D$, 
 such that  $q_f\neq q_0$, $L_N=L(\mathcal A)$, and for all $q'\in Q\setminus \{q_0,q_f\}$, 
\be\item 
$\delta$ has exactly one transition from $q_0$ to $q'$, which is $(q_0,\#_1, \equals(@),\id,q')$, 
and  exactly one transition  from $q'$ to $q_f$, which is  $(q',\#_{N+1}, \equals(@),\id,q_f)$,
   \ee
and 
 a $k$-restricted TSA $\mathcal A_N = (Q[N],\CCC[N], \SigHash{N}, q_0,  \delta_N, Q_f)$ with  $\deg(\mathcal A_N)=D+N+1$,
where 
\[C[N]=C\sqcup\{\llbracket i,q,q',c\rrbracket \mid  i\in[1,N+1],  q,q'\in Q,c\in C_@\}, 
\] 
   \[Q[N]=Q\sqcup\left\{\llbracket i,q,q',c\rrbracket_j\middle| i\in[1,N+1], j\in\{1,2\}, q,q'\in Q,c\in C_@\right\},\] 
such that  \be
\setcounter{enumi}{1}
\item $L_N=L(\mathcal A_N)$
\item   for all $x\in \Sigma_\varepsilon$, $\tau = (q,x, p,f,q')\in \delta$ if and only if $\tau\in \delta_N$
  \item for $i\in[1,N+1]$, and  for all $c\in C_@$
$\tau = (q,\#_i, p,f,q')\in\delta$ if and only if $\tau\not\in\delta_N$,   $f\in\{\push_n(c),\set(c),\up_n,\down,\id\}$, and 
    \begin{equation}\label{it:t1} 
    \begin{split} \tau_1 (i,c)&= (q,\varepsilon,  p,f,  \llbracket i,q,q',c\rrbracket_1)\in\delta_N\\
  \tau_2 (i,c)& = (\llbracket i,q,q',c\rrbracket_1,\epsilon, \equals(c),\push_{D+i}( \llbracket i,q,q',c\rrbracket)\llbracket i,q,q',c\rrbracket_2)\in\delta_N\\
    \tau_3 (i,c) &=(\llbracket i,q,q',c\rrbracket_2,\#_i,\equals(\llbracket i,q,q',c\rrbracket), \down, q')\in\delta_N. \end{split}\end{equation}
   \ee
  \end{lemma}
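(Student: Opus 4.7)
\emph{Plan.} The strategy is to first invoke Denkinger's theorem (\cref{equ}) to obtain a $k$-restricted TSA $\mathcal A'=(Q',C,\Sigma,q'_0,\delta',Q'_f)$ accepting $L$ with $\deg(\mathcal A')=D$, and then to build $\mathcal A$ by wrapping $\mathcal A'$ with bookkeeping to read the markers $\#_1,\dots,\#_{N+1}$ around the factors of a word $w_1\cdots w_N\in L$. I take $Q=\{q_0,q_f\}\sqcup Q'\times[1,N]$, where $(q,i)$ means ``in state $q$ of $\mathcal A'$ while currently reading the $i$th factor $w_i$''. The transitions of $\mathcal A$ are a copy of $\delta'$ within each phase $i$; phase-advancing transitions $((q,i),\#_{i+1},\true,\id,(q,i+1))$ for $i\in[1,N-1]$ and $q\in Q'$; and the $\#_1$- and $\#_{N+1}$-transitions between $q_0$ (respectively $q_f$) and every $q'\in Q\setminus\{q_0,q_f\}$ demanded by the structural item. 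Because every hash rule uses instruction $\id$, no new push indices appear and $\deg(\mathcal A)=D$; since the simulation transitions are inherited from $\mathcal A'$, $k$-restriction is preserved.

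\emph{Constructing $\mathcal A_N$.} Given $\mathcal A$, I define $\mathcal A_N$ by keeping every non-$\#_i$ transition of $\delta$ and replacing every hash transition $(q,\#_i,p,f,q')$ by the three transitions $\tau_1(i,c)$, $\tau_2(i,c)$, $\tau_3(i,c)$ of the statement, one for each label $c\in C_@$. Intuitively, $\tau_1$ silently performs $f$, $\tau_2$ pushes a fresh ``bookmark'' vertex labelled $\llbracket i,q,q',c\rrbracket$ at the new child index $D+i$ (encoding the data $(i,q,q',c)$ that will be needed in the main proof), and $\tau_3$ consumes the input letter $\#_i$ while returning the pointer to the parent vertex where $f$ had left it. The triple reads the single letter $\#_i$ and has the same net effect on the rest of the tree stack as the original $\tau$, so accepting runs of $\mathcal A$ correspond bijectively to accepting runs of $\mathcal A_N$ (use the unique $c$ matching the pointer label after $f$). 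The new push indices $D+1,\dots,D+N+1$ together with those of $\mathcal A$ give $\deg(\mathcal A_N)=D+N+1$, applying \cref{assumption:DegreeSharp} to rename if necessary.

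\emph{Verifications and main obstacle.} The remaining steps are (a) $L(\mathcal A)=L(\mathcal A_N)=L_N$ via the triple-insertion bijection; (b) $k$-restriction of $\mathcal A_N$, since every bookmark vertex is pushed exactly once and never revisited from below, and no vertex of the ``original'' part of the tree gains additional from-below visits; and (c) the transition-correspondence items of the statement, which are immediate from the definitions of $\tau_1,\tau_2,\tau_3$. The delicate point is ruling out spurious acceptances in $\mathcal A$ exploited by the universal $\#_1$- and $\#_{N+1}$-edges: in principle a run could take $q_0\xrightarrow{\#_1}(q,i)\neq(q'_0,1)$, or shortcut $(q,i)\xrightarrow{\#_{N+1}}q_f$ mid-simulation with the pointer momentarily at the root. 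I plan to preempt this by first replacing $\mathcal A'$ by a variant (with the same language, and degree increased by at most one) whose initial move pushes a distinguished ``legitimacy'' marker onto the tree stack at some fresh child index, and whose accept move verifies it; any run bypassing the canonical start $q_0\xrightarrow{\#_1}(q'_0,1)$ then either cannot consume the next required $\#_i$ (wrong phase counter) or fails the marker check before reaching $q_f$, so no word outside $L_N$ is accepted.
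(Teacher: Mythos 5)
Your construction of $\mathcal A_N$ from $\mathcal A$ --- the triple replacement, with $\tau_2$'s $\equals(c)$ check selecting the unique valid $c$, the degree count $D+N+1$, and the observation that each bookmark vertex is pushed once and never revisited from below --- matches the paper's argument. The genuine gap is in your construction of $\mathcal A$. You read the structural item as demanding a transition $(q_0,\#_1,\equals(@),\id,q')$ into \emph{every} $q'\in Q\setminus\{q_0,q_f\}$ and a transition $(q',\#_{N+1},\equals(@),\id,q_f)$ out of every such $q'$; the paper's own proof adds only the single transition $(q_0,\#_1,\equals(@),\id,(q_0)_{\text{temp}})$ into the old initial state, and transitions into $q_f$ only from the old accept states, so the intended reading is that every transition touching $q_0$ or $q_f$ has that \emph{form}, not that one exists for each $q'$. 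With your universal edges the spurious acceptances you flag are real, and the legitimacy-marker fix does not close them: a run may take $q_0\xrightarrow{\#_1}(q_0',N)$, push and later verify the marker perfectly legitimately (the marker transition lives in every phase of your product), simulate $\mathcal A'$ on some $v\in L$, and exit to $q_f$ reading $\#_{N+1}$; this accepts $\#_1 v\#_{N+1}\notin L_N$ for $N\geq 2$, because the marker records nothing about the phase in which the simulation began. Separately, verifying the marker forces the pointer up to the marker vertex a second time after the initial push, so that vertex is visited from below twice and $k$-restriction fails when $k=1$. The repair is to drop the universal edges entirely and wire $q_0$ only to $(q_0',1)$ and $q_f$ only from accept states in phase $N$; no marker is then needed.

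On the first claim (that $L_N$ is $k$-\MCF\ at all), your route is genuinely different from the paper's: you build the marked automaton directly by a product-with-phase-counter construction on a TSA for $L$, whereas the paper first shows that $\{w_1\#_2\cdots\#_N w_N \mid w_1\cdots w_N\in L\}$ is $k$-\MCF\ using closure under inverse homomorphism and intersection with a regular language (\cref{prop:closure_props}), then invokes \cref{equ} to obtain a $k$-restricted TSA for that language, and only afterwards bolts on $q_0$ and $q_f$. The closure-property route never has to argue soundness of a phase counter at the automaton level, which is precisely where your version goes wrong; your direct route would be fine once the wiring of $q_0$ and $q_f$ is corrected as above.
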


 \begin{proof}
 Let $\psi\colon(\SigHash{N})^*\to\Sigma^*$ be the homomorphism  induced by  the
   map from $\SigHash{N}$ to $\Sigma$ which restricts to the identity on $\Sigma$ and  for each $i\in[2,N]$ sends $\#_i$ to $\varepsilon$. Then let $L_{\text{temp}}$ be the intersection
   of $\psi^{-1}(L)$ with the regular language $\Sigma^*\#_2\Sigma^*\#_3\cdots\#_N \Sigma^*$, so by \cref{prop:closure_props} \eqref{item:2}  and
   \eqref{item:3}, $L_{\text{temp}}$ is $k$-\MCF.

  Let $\mathcal A_{\text{temp}}=(Q_{\text{temp}},\CCC, \SigHash{N}, (q_0)_{\text{temp}},  \delta_{\text{temp}}, (Q_f)_{\text{temp}})$ be 
   a $k$-restricted TSA accepting $L_{\text{temp}}$.  Recall 
   (\cref{rmk:finish-root}) that $\mathcal A_{\text{temp}}$ accepts only when the pointer is at the root.
Setting $Q=Q_{\text{temp}}\sqcup\{q_0,q_f\}$ where $q_0,q_f$ are distinct and not in $Q_{\text{temp}}$,  $Q_f=\{q_f\}$, and adding transitions 
\be\item $(q_0, \#_1, \equals(@), \id, (q_0)_{\text{temp}})$
\item$(q',\#_{N+1}, \equals(@),\id,q_f)$ for every $q'\in (Q_f)_{\text{temp}}$
 \ee yields a $k$-restricted TSA  $\mathcal A=(Q,\CCC, \SigHash{N}, q_0,  \delta,\{q_f\})$ accepting $L_N$ and satisfies the properties 
 \be
   \item for all $q'\in Q\setminus \{q_0,q_f\}$, $\delta$ has exactly one transition of the form $(q_0,\#_1, \equals(@),\id,q')$ 
\item for all $q'\in Q\setminus \{q_0,q_f\}$,     $\delta$ has exactly one transition of the form $(q',\#_{N+1}, \equals(@),\id,q_f)$.
   \ee given in the  statement of the lemma.

Construct a new $k$-restricted TSA $\mathcal A_N$ with state set $Q[N]$ and tree label alphabet $C[N]$ as in the statement of the lemma, and $\delta_N$ obtained from $\delta$ 
by  keeping all transitions in $\delta$ that read $x\in \Sigma_\varepsilon$,
 and replacing each  transition of the form  $(q,\#_i,p,f, q')\in \delta$ 
  by the (at most) $3|C|$ transitions of the form in \cref{it:t1} in the statement of the lemma.
   Note that these transitions can be followed if and only if the vertex reached after performing the transition $\tau=(q,\#_i,p,f, q')$ of $\mathcal A$ has the label $c$; if not, the second transition cannot be followed since $\equals(c)$ does not hold.

The effect of this change is as follows. If the transition $\tau=(q,\#_i,p,f, q')$ of $\mathcal A$
  leaves the pointer at a vertex $\nu$ with  label $c$ (where $f$ may be any of $\push_n(c),\set(c),\up_n$ for the case $c\neq @$, or $\down,\id$ for $c\in C_@$), then $\mathcal A_N$ will instead perform $(q,\varepsilon,  p,f,  \llbracket i,q,q',c\rrbracket_1)$  reading no input and moving the pointer to the same vertex $\nu$ labeled $c$, and move into the special state $ \llbracket i,q,q',c\rrbracket_1$. It then performs a second transition which    checks  the  label at $\nu$ is $c$, creates new vertex at address $\nu(D+i)$ with address labelled $ \llbracket i,q,q',c\rrbracket$, moves the pointer to this position, and moves the TSA into the state $ \llbracket i,q,q',c\rrbracket_2$. The third transition reads the letter $\#_i$, checks the agreement of labels, and moves down to leave the pointer pointing to $\nu$ with label $c$ and the TSA in state $q'$. 
  For an illustration of this process see \cref{fig:Eg-AddSpecial}.

  \begin{figure}[h]

\begin{subfigure}{1\textwidth}
\centering
\begin{tabular}{|c|c|}
\hline
 \begin{tikzpicture}[scale=.9]
\draw[decorate]  (0,2) --  (1,1);
\draw[decorate]   (1,1) -- (0,0);
\draw[decorate] (1,1) -- (2,2);
\draw[decorate] (-0.5,3)-- (0,2) -- (0.5,3);
\draw (0,2) node {$\bullet$};
\draw (1,1) node {$\bullet$};
\draw (2,2) node {$\bullet$};
\draw (0,0) node {$\bullet$};
\draw (-0.5,3) node {$\bullet$};
\draw (0.5,3) node {$\bullet$};
\draw (0.1,-0.3) node {$@$};
\draw (2.3,2) node {$b$};
\draw[decorate,ultra thick,->] (1.1,2) -- (1.6,2);
\draw (2.05,1.8) node {\color{cyan}\tiny$\nu$};

\phantom{           \draw (2,-.2) node [rotate=90] {\tiny $\llbracket 1,q_0,q',@\rrbracket$};}

\end{tikzpicture}
& \begin{tikzpicture}[scale=.9]
\draw[decorate]  (0,2) --  (1,1);
\draw[decorate]   (1,1) -- (0,0);
\draw[decorate] (1,1) -- (2,2);
\draw[decorate] (-0.5,3)-- (0,2) -- (0.5,3);
\draw (0,2) node {$\bullet$};
\draw (1,1) node {$\bullet$};
\draw (2,2) node {$\bullet$};
\draw (0,0) node {$\bullet$};
\draw (-0.5,3) node {$\bullet$};
\draw (0.5,3) node {$\bullet$};
\draw (0.1,-0.3) node {$@$};
\draw (2.3,2) node {$c$};
\draw[decorate,ultra thick,->] (1.1,2) -- (1.6,2);
\draw (2.05,1.8) node {\color{cyan}\tiny$\nu$};
\phantom{           \draw (2,-.2) node [rotate=90] {\tiny $\llbracket 1,q_0,q',@\rrbracket$};}
\end{tikzpicture}
\\
state: $q$ & $q'$\\
\hline
\end{tabular}

\caption{Single transition $\tau=(q,\#_i,\true,\set(c),q')\in\delta$ reading $\#_i$ for $i\in [2,N]$. 
\label{fig:lemmaAdd3Trans1}}

\end{subfigure}

\vspace{5mm}

\begin{subfigure}{1\textwidth}
\centering

\begin{tabular}{|c|c|}

\hline
 \begin{tikzpicture}[scale=.9]
\draw[decorate]  (0,2) --  (1,1);
\draw[decorate]   (1,1) -- (0,0);
\draw[decorate] (1,1) -- (2,2);
\draw[decorate] (-0.5,3)-- (0,2) -- (0.5,3);
\draw (0,2) node {$\bullet$};
\draw (1,1) node {$\bullet$};
\draw (2,2) node {$\bullet$};
\draw (0,0) node {$\bullet$};
\draw (-0.5,3) node {$\bullet$};
\draw (0.5,3) node {$\bullet$};
\draw (0.1,-0.3) node {$@$};
\draw (2.3,2) node {$b$};
\draw[decorate,ultra thick,->] (1.1,2) -- (1.6,2);
\draw (2.05,1.8) node {\color{cyan}\tiny$\nu$};

            \draw  (2,.8) node {$\bullet$};
              \draw   (0,0) --  (2,0.8);
            \draw (2.5,.5) node{\tiny $\llbracket 1,q_0,q',@\rrbracket$};
    
\end{tikzpicture}

&

 \begin{tikzpicture}[scale=.9]
\draw[decorate]  (0,2) --  (1,1);
\draw[decorate]   (1,1) -- (0,0);
\draw[decorate] (1,1) -- (2,2);
\draw[decorate] (-0.5,3)-- (0,2) -- (0.5,3);
\draw (0,2) node {$\bullet$};
\draw (1,1) node {$\bullet$};
\draw (2,2) node {$\bullet$};
\draw (0,0) node {$\bullet$};
\draw (-0.5,3) node {$\bullet$};
\draw (0.5,3) node {$\bullet$};
\draw (0.1,-0.3) node {$@$};
\draw (2.3,2) node {$c$};
\draw[decorate,ultra thick,->] (1.1,2) -- (1.6,2);
\draw (2.05,1.8) node {\color{cyan}\tiny$\nu$};

            \draw  (2,.8) node {$\bullet$};
              \draw   (0,0) --  (2,0.8);
             \draw (2.5,.5) node{\tiny $\llbracket 1,q_0,q',@\rrbracket$};
\end{tikzpicture}

\\&
\\
state: $q$
& $\llbracket i,q,q',c\rrbracket_1$ \\
\hline
 \begin{tikzpicture}[scale=.9]
 \draw[thick,magenta]  (2,2) --  (2,3.2);
\draw[decorate]  (0,2) --  (1,1);
\draw[decorate]   (1,1) -- (0,0);
\draw[decorate] (1,1) -- (2,2);
\draw[decorate] (-0.5,3)-- (0,2) -- (0.5,3);
\draw (0,2) node {$\bullet$};
\draw (1,1) node {$\bullet$};
\draw (2,2) node {$\bullet$};
\draw (0,0) node {$\bullet$};
\draw (-0.5,3) node {$\bullet$};
\draw (0.5,3) node {$\bullet$};
\draw (0.1,-0.3) node {$@$};
\draw (2.3,2) node {$c$};
\draw[decorate,ultra thick,->] (1.1,3.2) -- (1.6,3.2);
\draw (2.05,1.8) node {\color{cyan}\tiny$\nu$};
\draw (2.65,3) node {\color{cyan}\tiny$\nu(D+i)$};
\draw (2,3.2) node {\color{magenta}$\bullet$};
\draw (2.2,3.6) node { \footnotesize \tiny$\llbracket i,q,q',c\rrbracket$};

            \draw  (2,.8) node {$\bullet$};
              \draw   (0,0) --  (2,0.8);
             \draw (2.5,.5) node{\tiny $\llbracket 1,q_0,q',@\rrbracket$};
\end{tikzpicture}
&
 \begin{tikzpicture}[scale=.9]
 \draw[thick,magenta]  (2,2) --  (2,3.2);
\draw[decorate]  (0,2) --  (1,1);
\draw[decorate]   (1,1) -- (0,0);
\draw[decorate] (1,1) -- (2,2);
\draw[decorate] (-0.5,3)-- (0,2) -- (0.5,3);
\draw (0,2) node {$\bullet$};
\draw (1,1) node {$\bullet$};
\draw (2,2) node {$\bullet$};
\draw (0,0) node {$\bullet$};
\draw (-0.5,3) node {$\bullet$};
\draw (0.5,3) node {$\bullet$};
\draw (0.1,-0.3) node {$@$};
\draw (2.3,2) node {$c$};
\draw[decorate,ultra thick,->] (1.1,2) -- (1.6,2);
\draw (2.05,1.8) node {\color{cyan}\tiny$\nu$};
\draw (2.65,3) node {\color{cyan}\tiny$\nu(D+i)$};

\draw (2,3.2) node  {\color{magenta}$\bullet$};
\draw (2.2,3.6) node { \footnotesize\tiny $\llbracket i,q,q',c\rrbracket$};

            \draw  (2,.8) node {$\bullet$};
              \draw   (0,0) --  (2,0.8);
               \draw (2.5,.5) node{\tiny $\llbracket 1,q_0,q',@\rrbracket$};
\end{tikzpicture}
\\
&\\
state:

  $\llbracket i,q,q',c\rrbracket_2$ & $q'$\\

\hline
\end{tabular}

\captionsetup{justification=centering}
\caption{
Replacing $\tau$ by 
three transitions 
$\tau_1(i,c)= (q,\varepsilon,  \true,\set(c),  \llbracket i,q,q',c\rrbracket_1)$, \\
$\tau_2(i,c) = (\llbracket i,q,q',c\rrbracket_1,\varepsilon, \equals(c),\push_{D+i}( \llbracket i,q,q',c\rrbracket), \llbracket i,q,q',c\rrbracket_2)$, \\
$ \tau_3(i,c) =(\llbracket i,q,q',c\rrbracket_2,\#_i, \equals(\llbracket i,q,q',c\rrbracket), \down, q')
\in\delta_N$ 
\label{fig:lemmaAdd3Trans2}
}

\end{subfigure}

\caption{Transitions in the TSA $\mathcal A_N$ versus in $\cA$ when  $\tau=(q,\#_i,\true,\set(c),q')\in\delta$ 
in Lemma~\ref{lem:InsertHashes} and $i\in [2,N]$. For this illustration we are assuming we are replacing the transition which reads $\#_j$ in order starting from $j=1$ to $j=i$, so the root has a vertex at address $1(D+1)$ labeled $\llbracket 1,q_0,q',@\rrbracket$ and but not yet a vertex at address $1(D+N+1)$.
\label{fig:Eg-AddSpecial}}
\end{figure}
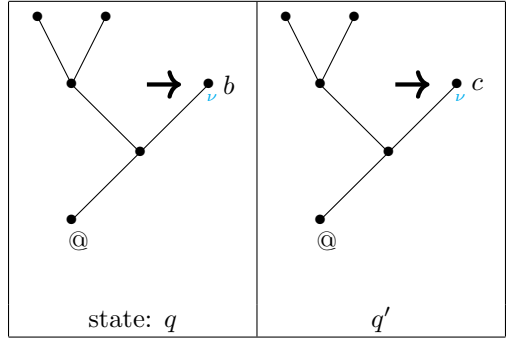
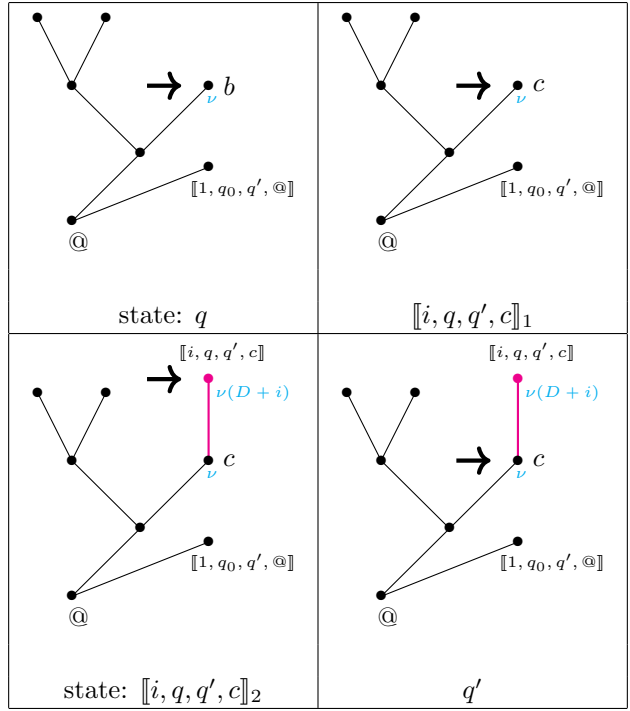

  It follows that $L(\mathcal A_N)=L_N$, and the above conditions are satisfied,  since this is the only change.  
 \end{proof}

\begin{definition}[Special vertices]
We call the vertices added to construct the tree  
 in \cref{lem:InsertHashes} 
 \emph{special}.  A special vertex with address $\nu(D+i)$  labeled 
$ \llbracket i,q,q',c\rrbracket$ is called an \emph{$i$-th special vertex}.
\end{definition} 

 \begin{remark}By construction, if $T_0$ is the \dendrite\ corresponding to a run $\mathcal R$
 of $\mathcal A_N$  accepting a word $w\in L_N$,  then 
    $T_0$ will have one $i$-th special vertex for each $i\in[1,N+1]$. Note that it is possible to have more than one special vertex connected by an edge to the same vertex $\nu$ 
    (for example, the root is connected to the \nth{1} and $(N+1)$-th special vertices, and possibly more).

The conditions in \cref{lem:InsertHashes} imply that if $\mathcal R\in  \delta^*$ is a run  accepting  a word $w\in L_N$, there is a  run  $\mathcal R'\in\delta_N^*$ accepting $w$ obtained from $\mathcal R$ by replacing the $N+1$ transitions which read the letters $\#_i$ by $3(N+1)$ of  the (at most) $3(N+1)|C|$ new transitions with appropriate choices for $q_j,q_j',c_j$ as in the lemma.
By construction, special vertices 
are visited by $\mathcal A_N$ from below exactly once in any run accepting a word $w\in L_N$.

 \end{remark}

  \begin{definition}[$H_{N,C}$, $H_{N,@}$, $\circledcirc_{N,D}$]\label{defn:setsHCirc} 
 Let  $N,D\in \N_+$ and $C$ be an alphabet with $@\not\in C$.
   Define $H_{N,C}$ to be the set of all  
$2\times N$ arrays of the form
\begin{equation}\label{eqn:modHA} 
\begin{pmatrix}
\% & \cdots & \%  & n_i & \cdots &   n_N \\
\% & \cdots & \%  & c_i &  \cdots &  c_N \\
\end{pmatrix}
\end{equation} where $i\in[1,N]$ and $n_{j} \in \{0,1,2\}$, $c_j\in C$  for each $j\in[i,N]$,
 $H_{N,@}$ to be the set of all 
$2\times N$ arrays of the form
\begin{equation}\label{eqn:modHA} 
\begin{pmatrix}
n_1  & n_2  &  \cdots &   n_N  \\
 @ &       @ & \cdots &  @ \\
\end{pmatrix}
\end{equation} where $n_{i} \in \{0,1,2\}$  for each $i\in[2,N]$, $n_{1} \in \{1,2\}$,
and 
$\circledcirc_{N,D}$ 
to  be the set of $N+1$ tuples over the alphabet $\{1,\dots, D+N+1, ``\south"\}$. 
For reasons that will become apparent in the proof of \cref{lem:PermTech} below,  the set  $\circledcirc_{N,D}$ is called the \emph{compass}, and its elements \emph{compass elements}. 
\end{definition}

Recall the notion of an unlabelled tree (Definition~\ref{defn:UnlabelledTree}).\begin{definition}\label{defn:addStick}
     Let $\overline T_0,\overline T_1$ be unlabelled rooted trees. Define $\overline T_0\addRoot \overline T_1$ if the degree of the root of 
           $\overline T_1$ is 1, and   $\overline T_0$ is the graph minor of  $\overline T_1$ obtained by contracting the edge $\{\epsilon, 1\}$.
           See Figure~\ref{fig:newTree}.
 It follows that $\nu\in V(\overline T_0)$ if and only if $1\nu\in V(\overline T_1)$.

           \begin{figure}[h]

\begin{tabular}{ccc}
 \begin{tikzpicture}[scale=.9]

    \draw (-2,2.5) --  (0,1.2) --  (-0.7,2.5);
      \draw (.7,2.5) --  (0,1.2) --  (2,2.5);
       \draw [dashed] (-2,2.5) --  (-2,3.7);
     
            \draw [dashed]   (-.7,2.5)--(-1.4,3.7);
            \draw [dashed]  (-.7,2.5)--(0,3.7); 
                     \draw [dashed] (2,2.5) --  (2,3.7);

 \draw (0,0.3) node {\phantom{$\bullet$}};
  \draw  (-.7,2.5) node {$\bullet$};
    \draw  (.7,2.5) node {$\bullet$};
      \draw  (-2,2.5) node {$\bullet$};
        \draw  (2,2.5) node {$\bullet$};

\draw (0,1.2) node {$\bullet$};
\draw (-.3,1.2) node {\color{cyan}\tiny$\varepsilon$};

               \draw (-2.2,2.3) node {\color{cyan}\tiny$1$};  
          \draw (-.9,2.3) node {\color{cyan}\tiny$2$};
        \draw (.9,2.3) node {\color{cyan}\tiny$3$};
        \draw (2.2,2.3) node {\color{cyan}\tiny$4$};

\end{tikzpicture}
& 
 \begin{tikzpicture}[scale=.9]
   \draw[thick, magenta]  (0,0.3) --  (0,1.2);
    \draw (-2,2.5) --  (0,1.2) --  (-0.7,2.5);
      \draw (.7,2.5) --  (0,1.2) --  (2,2.5);
       \draw [dashed] (-2,2.5) --  (-2,3.7);
           
                         \draw [dashed]   (-.7,2.5)--(-1.4,3.7);
            \draw [dashed]  (-.7,2.5)--(0,3.7); 
                     \draw [dashed] (2,2.5) --  (2,3.7);

 \draw (0,0.3) node {$\color{magenta}\bullet$};
  \draw  (-.7,2.5) node {$\bullet$};
    \draw  (.7,2.5) node {$\bullet$};
      \draw  (-2,2.5) node {$\bullet$};
        \draw  (2,2.5) node {$\bullet$};

\draw (-.2,0.3) node {\color{magenta}\tiny$\varepsilon$};
\draw (0,1.2) node {$\bullet$};
\draw (-.3,1.2) node {\color{magenta}\tiny$1$};

               \draw (-2.2,2.3) node {\tiny\color{magenta}$1$\color{cyan}$1$};  
          \draw (-.9,2.3) node {\tiny\color{magenta}$1$\color{cyan}$2$};
        \draw (.9,2.3) node {\tiny\color{magenta}$1$\color{cyan}$3$};
        \draw (2.2,2.3) node {\tiny\color{magenta}$1$\color{cyan}$4$};

\end{tikzpicture}

\\  $\overline T_0$ && $\overline T_1$
\end{tabular}

\caption{Unlabelled trees  $\overline T_0\addRoot \overline T_1$
as in \cref{defn:addStick}. Recall  that the small digits drawn at each vertex are addresses and not labels.
\label{fig:newTree}}
\end{figure}
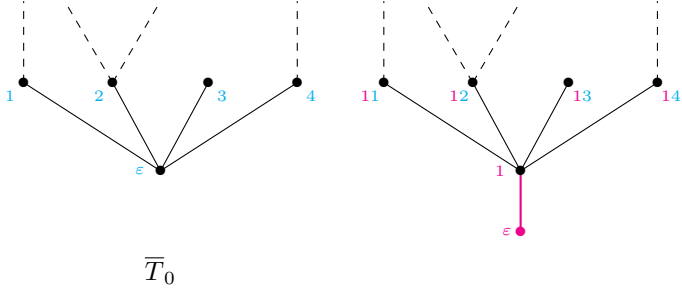

         \end{definition}

 We are now ready to prove the key lemma.

\begin{lemma}\label{lem:PermTech}
Let $k,N\in\N_+$, $\sigma\in S_N$, and $L_\sigma=\{w_{\sigma(1)}\cdots w_{\sigma(N)}\mid w_1\cdots w_N\in L\}$. If $L$ is $k$-\MCF\ then 
 $L_\sigma$ is  $(\constPerm)$-\MCF. 
\end{lemma}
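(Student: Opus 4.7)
The plan is to construct a $(\constPerm)$-restricted TSA $\cA_\sigma$ accepting $L_\sigma$, by simulating the $k$-restricted TSA $\cA_N$ from \cref{lem:InsertHashes} (which accepts $L_N$), but reading its input in the order prescribed by $\sigma$. The core difficulty is that $\cA_N$'s accepting run on $\#_1 w_1 \#_2 \cdots \#_N w_N \#_{N+1}$ processes the factors $w_i$ in the order $1,2,\dots,N$, whereas $\cA_\sigma$ must consume $w_{\sigma(1)}, w_{\sigma(2)}, \dots, w_{\sigma(N)}$. I will therefore have $\cA_\sigma$ nondeterministically guess, at the moment each tree-stack vertex is created, the full data of what $\cA_N$ would do at that vertex (the sequence of intended $\cA_N$-labels and state transitions across all $N$ phases), then verify that these guesses are consistent as the phases are subsequently processed in the order $\sigma$.

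I enlarge the state set and tree-label alphabet of $\cA_N$ as follows. Non-root vertex labels are (roughly) triples in $C[N]\times H_{N,C[N]}\times \circledcirc_{N,D}$ (with the root using $H_{N,@}$ as in \cref{defn:setsHCirc}). The $C[N]$-component records the current $\cA_N$-label; the $H_{N,C[N]}$-component is a \emph{modified history array} recording, for each phase $j\in [1,N]$, the remaining number of visits-from-below pending at this vertex (bounded by $2$, since in one phase $\cA_N$ enters and then leaves the subtree above the vertex) together with the intended $\cA_N$-label at each such visit; and the $\circledcirc_{N,D}$-component is a \emph{compass} recording, for each of the $N+1$ special vertices, the direction (a child-index or $\south$ for ``toward the root'') in which that special vertex lies. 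The states of $\cA_\sigma$ track the current phase $i$, the currently simulated state of $\cA_N$, and auxiliary mode flags distinguishing ``simulation'' and ``navigation.''

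The automaton then operates in $N$ simulation phases bracketed by initial and terminal bookkeeping. In phase $i$, $\cA_\sigma$ reads $w_{\sigma(i)}$ and simulates exactly the transitions that $\cA_N$ would perform between the $\sigma(i)$-th and $(\sigma(i){+}1)$-th special vertices, decrementing the pending-visit counts in the appropriate history arrays as it goes. Between phases $i$ and $i+1$, $\cA_\sigma$ navigates from its current vertex to the start of the $\sigma(i+1)$-th segment, consulting the compass at each vertex to choose the correct $\down$ or $\up_n$: such a navigation traverses at most one down-segment followed by one up-segment. New vertices created by $\push_n$ during simulation have their history-array and compass components filled by nondeterministic guess, with the special-vertex mechanism of \cref{lem:InsertHashes} handling the $i$-th special vertices. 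The $\addRoot$ construction of \cref{defn:addStick} is used to prepend a staging root so that initial setup has a parent to attach to, and terminal validation checks that all pending counts are exhausted, the pointer is at the (original) root, and the simulated state lies in $Q_f$.

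The main obstacle will be bounding the restriction level and verifying correctness. A non-special vertex $\nu$ receives at most $k$ visits from below from the simulation of $\cA_N$'s own transitions (by $k$-restrictedness of $\cA_N$); up to $N$ further visits from the $N$ inter-phase navigations (each upward pass through $\nu$ counting once); and at most a further constant (absorbed into the $+3$) from setup, from the special-vertex subroutine of \cref{lem:InsertHashes}, and from the $\addRoot$ preamble. Correctness then follows from a standard guess-and-verify argument: the valid guess-sequences are in bijection with accepting runs of $\cA_N$ on words $\#_1 w_1\#_2\cdots\#_N w_N\#_{N+1}$ with $w_1\cdots w_N\in L$, so $L(\cA_\sigma)=L_\sigma$. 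The trickiest part of the verification is maintaining consistency of the compass components as the tree grows and the pointer moves — in particular, recomputing compass entries at newly-pushed vertices correctly from the parent's compass and the push-index.
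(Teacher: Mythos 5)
Your overall architecture --- per-vertex history arrays recording, for each phase, the pending visits and intended $\cA_N$-labels; a compass pointing toward the $N+1$ special vertices; simulation of the segments of an accepting run of $\cA_N$ in the order given by $\sigma$; and a guess-and-verify correctness argument --- is essentially the paper's. But there is a genuine gap in how the tree stack is built. You create vertices lazily, ``by $\push_n$ during simulation,'' guessing their history and compass data at creation time. This cannot work: when $\APerm$ simulates the segment of $\cA_N$'s run corresponding to $w_{\sigma(1)}$, its pointer must traverse a portion of $\cA_N$'s \emph{final} tree, and that portion will in general contain vertices that $\cA_N$ only creates while reading some $w_j$ whose segment $\APerm$ has not yet processed. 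An $\up_n$ (or a $\down$ returning from a child) onto such a vertex is then undefined, and pushing the vertex at that moment is no repair: when the segment for $w_j$ is eventually simulated, $\cA_N$'s genuine $\push_n$ would have to be replayed onto an already-occupied address, which the TSA semantics forbids. Already for $k=1$, $N=2$ and $\sigma$ the transposition one can arrange that the $w_2$-segment of $\cA_N$'s run begins at a special vertex attached to, and then descends through, vertices pushed during the $w_1$-segment, so your automaton deadlocks on valid inputs.

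The paper's resolution is precisely the idea your proposal is missing: a dedicated Phase One that reads no input and nondeterministically constructs the \emph{entire} labelled tree $T_1$ (all ordinary vertices, all $N+1$ special vertices, and all history/compass guesses) before any simulation begins; the $N$ simulation subroutines then use only $\up_n$, $\down$, $\set$ and $\id$, never $\push$ (each $\push_n$ of $\cA_N$ is replayed as an $\up_n$ whose target must carry a ``not yet visited'' marker). This is also what makes the compass coherent --- every compass entry refers to a special vertex that already exists --- and what yields the count $1+(k+N+1)+1=\constPerm$ of visits from below. Two smaller points: verifying at the end that ``all pending counts are exhausted'' cannot be done from the root alone; it requires a full depth-first traversal of $T_1$ (the paper's Phase Three), which costs one further visit from below per vertex and must appear explicitly in your budget. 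And returning the pointer to a fixed address between segments (the paper uses address $1$), rather than navigating point-to-point, simplifies both the compass logic and the counting.
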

\begin{proof}
Assume $D\in \N_+$ and $\mathcal A_N = (Q[N],\CCC[N], \SigHash{N}, q_0,  \delta_N, Q_f)$ is the $k$-restricted TSA constructed in  Lemma~\ref{lem:InsertHashes} of degree $D+N+1$ accepting $L_N \subseteq \SigHash{N}^* $ with states 
 $Q[N]=Q\sqcup Q_{\text{spec}}$
and  tree-label alphabet $C[N]=C\sqcup C_{\text{spec}}$ 
where \[Q_{\text{spec}}= \{ \llbracket i,q,q',c\rrbracket_j\mid i\in[1,N+1], q,q'\in Q, c\in C,j\in\{1,2\}\}\] 
and
 \[C_{\text{spec}}=\{\llbracket i,q,q',c\rrbracket \mid  i\in[1,N+1],   q,q'\in Q,c\in C\},\] and let $\delta_N$ (resp. $\delta$) be the transitions of $\cA_N$ (resp. of $\cA$)  as in Lemma~\ref{lem:InsertHashes}. Let   $H_{N,C},H_{N,@}$ and $\circledcirc_{N,D}$ be as in Definition~\ref{defn:setsHCirc}.

We aim to construct a $(\constPerm)$-TSA $\APerm$  with alphabet $\SigHash{N}$, $\deg(\APerm)=\deg(\cA_N)=D+N+1$,  states $\mathfrak Q$  and tree-label alphabet $\mathfrak C$  listed below,
 which will accept (ignoring parentheses)
  \begin{equation}\label{Eq:acceptedWords}
\left(\#_{\sigma(1)}w_{\sigma(1)}\#_{\sigma(1)+1}\right)
\left(\#_{\sigma(2)}w_{\sigma(2)}\#_{\sigma(2)+1}\right)
\cdots
\left(\#_{\sigma(N)}w_{\sigma(N)}\#_{\sigma(N)+1} \right)
\end{equation} if and only if $w_1\cdots w_N\in L$. Applying an erasing homomorphism (Proposition~\ref{prop:closure_props}
\eqref{item:3}) 
to remove all $\#_i$ symbols will then give the result.

The states $\mathfrak Q$ consist of the following disjoint sets:
\bi\item a start state $\aleph_{\text{start}}$ and four additional states $\aleph_{\text{start},j}$ for $j\in[1,4]$
\item  $\{0,1\}^{N+1}$, the set of   all binary strings of length $N+1$
\item $ \{ \aleph_i, \beth_i\mid i\in[1,N]\}$
\item 
$Q\times [1,N]\times[1,7]$
\item $Q_{\text{spec}}'=
\{ \llbracket i,q,q',c\rrbracket_j\mid i\in[1,N+1], j\in[1,4], q,q'\in Q, c\in C\}\supset Q_{\text{spec}}$
\item $Q_{\text{Phase3}}$ which is a finite set of 
 additional states, 
\ei
and  the  tree-label alphabet is \[\mathfrak C=
  C_{\text{spec}}\sqcup \left(\left(H_{N,C}\sqcup H_{N,@}\right)\times \circledcirc_{N,D}\times (C_@\sqcup \{-\})\right)\sqcup \{\Box\}\]   
where $-,\Box\notin C_@$ and distinct symbols. Note that $@\notin \mathfrak C$ as per \cref{defn:TreeWithLabels}.

The TSA $\APerm$ will be designed such that the following property holds. 
 \begin{propx}\label{SENTENCE-REF}
A pair  of labeled trees $(T_0,T_1)$ \emph{satisfy Property~\ref{SENTENCE-REF}} if following properties hold.
\be

\item  $\overline T_0\addRoot \overline T_1$ where  $\overline T_0,\overline T_1$ are the  unlabelled trees of $T_0,T_1$ respectively

\item 
 $\#_1w_1\#_2w_2\cdots \#_Nw_N\#_{N+1}$ is accepted by a run of $\cA_N$
 with final tree $T_0$ if and only if
\[\left(\#_{\sigma(1)}w_{\sigma(1)}\#_{\sigma(1)+1}\right)
\left(\#_{\sigma(2)}w_{\sigma(2)}\#_{\sigma(2)+1}\right)
\cdots
\left(\#_{\sigma(N)}w_{\sigma(N)}\#_{\sigma(N)+1} \right)\]
is accepted by a run of $\APerm$  with final tree $T_1$ 

\item $\nu$ is an $i$-th special vertex of $T_0$ labelled by $\llbracket i,q,q',c\rrbracket \in C_{\text{spec}}$  if and only if $T_1$ has  vertex $1\nu$ labelled 
  $\llbracket i,q,q',c\rrbracket$ if and only if $\nu=\nu'(D+i)$ for some $\nu'\in \N^*$. 

\ee

    \end{propx}

We call a vertex of $T_1$ labeled by $\llbracket i,q,q',c\rrbracket \in C_{\text{spec}}$ 
an \emph{$i$-th $\sigma$-special vertex} of  $T_1$.

The TSA $\APerm$  will operate in three 
 \emph{phases}: Phase One to non-deterministically construct a labeled tree, Phase Two  to simulate a run  of $\cA_N$ using the tree constructed in the first phase, and Phase Three to verify the non-deterministic choices in the first phase were valid.
 Phase Two will comprise  $N$ subroutines which are applied in a specific order.
 Each phase and subroutine  uses a distinct subset of states (aside from the first and last states to transition between them) to enforce that $\APerm$  applies each phase and subroutine in the intended order.
 We therefore present the transitions of $\APerm$ in the form of an algorithm.

\subsection*{Phase One} 
Phase One  consists of a set of transitions which can be used to non-deterministically construct a tree stack $(T_1,\varepsilon)$ with root of degree 1 and out-degree of every other vertex at most $D+N+1$. It does not read any input letters. 
It  uses states $\aleph_{\text{start}}$, $\aleph_{\text{start},j}$ for $j\in[1,4]$, 
$ \{0,1\}^{N+1}$ and $\alephEND$,
 and the following five types of transitions.  The state $\alephEND$ will be the final state for this phase.

\subsubsection*{1. (Push a copy of the root and first and last special vertices)}
For all  $\mathfrak h\in H_{N,@}$ and $\mathfrak c\in \{(1, x_2, \dots, x_N, D+N+1) \mid x_i\in  [1,  D+N], i\in [2,N]\}\subset \circledcirc_{N,D}$   we have a transition
  \begin{equation}\label{transition1-1} 
 (\aleph_{\text{start}}, \equals(@), \varepsilon, \push_1((\mathfrak h,\mathfrak c,@)),  \aleph_{\text{start},1}).
\end{equation} 
 
 For all $q',q''\in Q$, 
$\mathfrak h\in H_{N,@}$, 
$\mathfrak c\in \{(1, x_2, \dots, x_N, D+N+1) \mid x_i\in  [1,  D+N+1], i\in [2,N]\}\subset \circledcirc_{N,D}$,
 we have   transitions
  \begin{equation}\label{transition1-2a} 
    \begin{split} 
&(\aleph_{\text{start},1},\varepsilon, \equals((\mathfrak h,\mathfrak c, @)), \push_{D+1}(\llbracket 1 ,q_0, q',@\rrbracket), \aleph_{\text{start},2})\\
&(\aleph_{\text{start},2},\varepsilon, \equals(\llbracket 1 ,q_0,q',@\rrbracket), \down, \aleph_{\text{start},3})\\
&(\aleph_{\text{start},3},\varepsilon, \equals((\mathfrak h,\mathfrak c, @)), \push_{D+N+1 }(\llbracket N+1 , q'',q_f,@\rrbracket), \aleph_{\text{start},4})\\
&(\aleph_{\text{start},4},\varepsilon, \equals(\llbracket N+1 ,q'',q_f,@\rrbracket), \down, 10^{N-1}1)\end{split}\end{equation}

See \cref{fig:buildTEMP2} for an example of a tree $T_1$ constructed by a sequence of  transitions of type 1.

\subsubsection*{2. (Push a copy of a remaining  special vertex)}

For all $i \in [2,N]$, 
$\mathfrak h\in H_{N,C}\sqcup H_{N,@}$, 
$\mathfrak c\in \circledcirc_{N,D}$ where $\mathfrak c$   has $D+i$ as its  $i$-th   coordinate,  
$c_0\in\{@,-\}$,
  $\llbracket i ,q,q',c\rrbracket\in C_{\text{spec}}$,
$\aleph'=b_1\cdots b_{N+1}$ with
  $a_i =0$, $b_i =1$ and $a_j=b_j\in \{0,1\}$ for $j\neq i$,
 we have a transition
   \begin{equation}\label{transition1-2b} 
   (\aleph,\varepsilon, \equals((\mathfrak h,\mathfrak c, c_0)), \push_{D+i }(\llbracket i ,q,q',c\rrbracket), \aleph').\end{equation}

\subsubsection*{3. (Push  copy of a non-root non-special vertex)}
For all 
$\ell \in [1,D+N+1]$, $\mathfrak h\in H_{N,C}\sqcup H_{N,@}$,  $\mathfrak h'\in H_{N,C}$, 
  $c_0\in\{@,-\}$, $\aleph \in \{0,1\}^{N+1}$ and 
$\mathfrak c,\mathfrak c' \in\circledcirc_{N,D}$  satisfying the following conditions for all $i\in [1,N+1]$:
\bi
\item if $\mathfrak c$ has $\ell$ as its $i$-th coordinate, then the $i$-th coordinate of $\mathfrak c'$ is $j\in[1,D+N+1]$

\item if $\mathfrak c$ has ``south'' or $j\neq \ell$ as its $i$-th coordinate, the $i$-th coordinate of $\mathfrak c'$ is  ``south'',
\ei
then
we have a transition
  \begin{equation}\label{transition1-3} 
(\aleph,\varepsilon, \equals((\mathfrak h,\mathfrak c, c_0)), \push_{\ell}(\mathfrak h',\mathfrak c', - ), \aleph).\end{equation}

\subsubsection*{4. (Move down)} For all $\aleph\in \{0,1\}^{N+1}$ and
 $d\in \left(H_{N,C}\times \circledcirc_{N,D}\times \{-\}\right)\sqcup\left( \{0\}\times C_{\text{spec}}\right)$,
we have a transition
  \begin{equation}\label{transition1-4} 
 (\aleph,\varepsilon, \equals(d), \down, \aleph).\end{equation}

\subsubsection*{5. (Finish Phase One)} For all $\mathfrak h\in H_{N,@},\mathfrak c\in \circledcirc_{N,D}$, we have a transition
   \begin{equation}\label{transition1-5} 
   (1^{N+1},\varepsilon, \equals((\mathfrak h,\mathfrak c,@)), \id, \alephEND
).\end{equation}.

 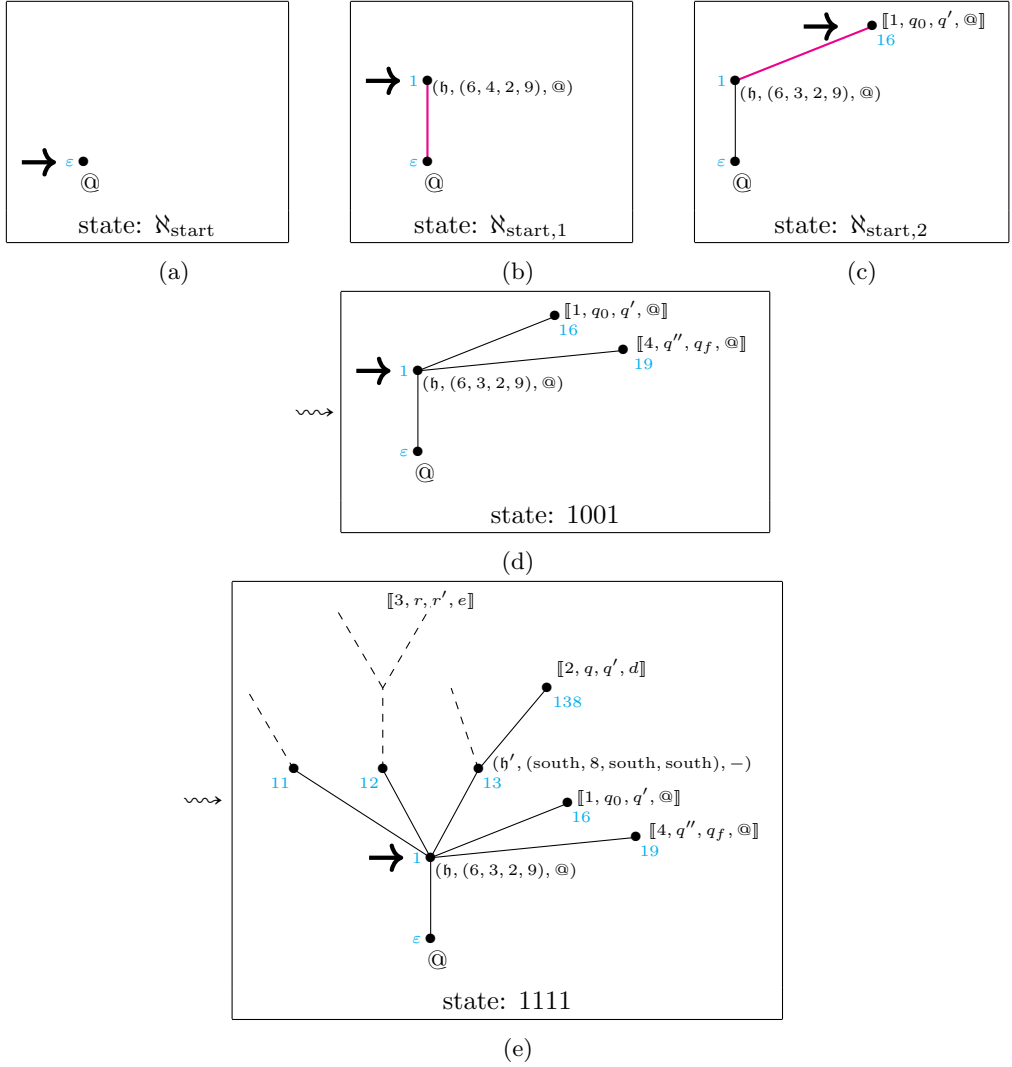
\begin{figure}[p]

\begin{subfigure}{0.3\textwidth}
\begin{tabular}{|c|}

\hline
 \begin{tikzpicture}[scale=.9]
 \draw (0.1,-0.3) node {$@$};
 \draw (0,0) node {$\bullet$};
\draw (-.2,0) node {\color{cyan}\tiny$\varepsilon$};

\phantom{    
               \draw (1.9,2.1) node{\tiny $\llbracket 1,q_0,q',@\rrbracket$};
           }
\draw[decorate,ultra thick,->] (-0.9,0) -- (-0.4,0);
\end{tikzpicture}

\\
state: $\aleph_{\text{start}}$\\
\hline
\end{tabular}
\caption{ 
\label{fig:buildA}}
\end{subfigure}
\begin{subfigure}{0.3\textwidth}
\begin{tabular}{|c|}
\hline
 \begin{tikzpicture}[scale=.9]
 
 \phantom{     
                  \draw (1.9,2.1) node{\tiny $\llbracket 1,q_0,q',@\rrbracket$};
           }         
           
  \draw[thick,magenta]  (0,0) --  (0,1.2);
  \draw (0.1,-0.3) node {$@$};
 \draw (0,0) node {$\bullet$};
\draw (-.2,0) node {\color{cyan}\tiny$\varepsilon$};
\draw (0,1.2) node {$\bullet$};
\draw (-.2,1.2) node {\color{cyan}\tiny$1$};
\draw (1.1,1.1) node {\tiny $(\mathfrak h,(6,4,2,9),@)$};

\draw[decorate,ultra thick,->] (-0.9,1.2) -- (-0.4,1.2);
\end{tikzpicture}
\\
state: 
$\aleph_{\text{start},1}$\\
\hline
\end{tabular}
\caption{ 
\label{fig:buildB}}
\end{subfigure}
\begin{subfigure}{0.3\textwidth}
\begin{tabular}{|c|}
\hline
 \begin{tikzpicture}[scale=.9]
 
    \draw [thick,magenta]   (0,1.2) --  (2,2);
  \draw  (0,0) --  (0,1.2);

  \draw (0.1,-0.3) node {$@$};
  
 \draw (0,0) node {$\bullet$};

\draw (-.2,0) node {\color{cyan}\tiny$\varepsilon$};
\draw (0,1.2) node {$\bullet$};
\draw (-.2,1.2) node {\color{cyan}\tiny$1$};
\draw (1.1,1.0) node {\tiny $(\mathfrak h,(6,3,2,9),@)$};

            \draw  (2,2) node {$\bullet$};
           
           \draw (2.9,2.1) node{\tiny $\llbracket 1,q_0,q',@\rrbracket$};
        \draw (2.2,1.8) node {\color{cyan}\tiny$16$};        
        \draw[decorate,ultra thick,->] (1,2) -- (1.5,2);
        \end{tikzpicture}

\\state:
 $\aleph_{\text{start},2}$\\
\hline
\end{tabular}
\caption{
\label{fig:buildTEMPAA}}
\end{subfigure}

\begin{subfigure}{0.4\textwidth}
$\rightsquigarrow$
\begin{tabular}{|c|}
\hline
 \begin{tikzpicture}[scale=.9]
  \draw  (0,0) --  (0,1.2);

  \draw (0.1,-0.3) node {$@$};
  
 \draw (0,0) node {$\bullet$};

\draw (-.2,0) node {\color{cyan}\tiny$\varepsilon$};
\draw (0,1.2) node {$\bullet$};
\draw (-.2,1.2) node {\color{cyan}\tiny$1$};
\draw (1.1,1.0) node {\tiny $(\mathfrak h,(6,3,2,9),@)$};

      \draw  (0,1.2) --  (3,1.5);
       \draw  (3,1.5) node {$\bullet$};
        \draw (4,1.6) node {\tiny $\llbracket 4,q'',q_f,@\rrbracket$};
        \draw (3.3,1.3) node {\color{cyan}\tiny$19$};       
            \draw  (2,2) node {$\bullet$};
              \draw   (0,1.2) --  (2,2);
           \draw (2.9,2.1) node{\tiny $\llbracket 1,q_0,q',@\rrbracket$};
        \draw (2.2,1.8) node {\color{cyan}\tiny$16$};

        \draw[decorate,ultra thick,->] (-0.9,1.2) -- (-0.4,1.2);
        \end{tikzpicture}

\\state:
 $1001$\\
\hline
\end{tabular}
\caption{ 
\label{fig:buildTEMP2}}
\end{subfigure}

\begin{subfigure}{0.6\textwidth}
$\rightsquigarrow$
\begin{tabular}{|c|}
\hline
 \begin{tikzpicture}[scale=.9]
  \draw  (0,0) --  (0,1.2);
    \draw (-2,2.5) --  (0,1.2) --  (-0.7,2.5);
      \draw (.7,2.5) --  (0,1.2); 
       \draw [dashed] (-2,2.5) --  (-2.7,3.7);
              \draw [dashed] (-.7,2.5) --  (-.7,3.7)--(0,4.9); 
               \draw [dashed]   (-.7,3.7)--(-1.4,4.9);
                     \draw (0.7,2.5) --  (1.7,3.7);
                    \draw [dashed]   (.7,2.5) --  (0.3,3.7);

  \draw (0.1,-0.3) node {$@$};
  
 \draw (0,0) node {$\bullet$};
  \draw  (-.7,2.5) node {$\bullet$};
    \draw  (.7,2.5) node {$\bullet$};
      \draw  (-2,2.5) node {$\bullet$};
  
                \draw  (1.7,3.7) node {$\bullet$};

\draw (-.2,0) node {\color{cyan}\tiny$\varepsilon$};
\draw (0,1.2) node {$\bullet$};
\draw (-.2,1.2) node {\color{cyan}\tiny$1$};

  \draw  (0,5) node {\tiny $\llbracket 3,r,r',e\rrbracket$};
        \draw  (2.5,4.0) node {\tiny $\llbracket 2,q,q',d\rrbracket$};
        
        \draw (2.8,2.6) node {\tiny $(\mathfrak h',(\text{south},8, \text{south},\text{south}),-)$};
         \draw (-2.2,2.3) node {\color{cyan}\tiny$11$};   \draw (-.9,2.3) node {\color{cyan}\tiny$12$};
        \draw (.9,2.3) node {\color{cyan}\tiny$13$};
 
            \draw (2,3.5) node {\color{cyan}\tiny$138$};

\draw[decorate,ultra thick,->] (-0.9,1.2) -- (-0.4,1.2);

\draw (1.1,1.0) node {\tiny $(\mathfrak h,(6,3,2,9),@)$};

        \draw (4,1.6) node {\tiny $\llbracket 4,q'',q_f,@\rrbracket$};
        \draw (3.2,1.3) node {\color{cyan}\tiny$19$};        

        \draw  (3,1.5) node {$\bullet$};
           \draw  (0,1.2) --  (3,1.5);
            \draw  (2,2) node {$\bullet$};
              \draw   (0,1.2) --  (2,2);
           \draw (2.9,2.1) node{\tiny $\llbracket 1,q_0,q',@\rrbracket$};
        \draw (2.2,1.8) node {\color{cyan}\tiny$16$};

\end{tikzpicture}
\\state: $1111$\\
\hline
\end{tabular}

\caption{ 
\label{fig:buildE}}
\end{subfigure}

\caption{
An example of transitions in Phase One of Lemma~\ref{lem:PermTech} with $N=3$ and $D=5$. Figures~\ref{fig:buildA}--\ref{fig:buildTEMP2}
show the effect of the 
transitions of type (1)~(push a copy of the root and first and last special node);
\cref{fig:buildE} shows the tree stack  obtained after further applications of transitions of type (2), (3), and (4), with the pointer at address $1$. 
\cref{fig:buildE} may (possibly) be followed by further transitions of type (3) and (4), and then a transition of type (5)~(finish Phase One) moving to state $ \alephEND$, since all three special vertices have been constructed (as indicated by the state $1111$).
\\
 The compass entry $({6},{3},{2},9)$ at vertex $1$ tells us that the \nth{1}  $\sigma$-special vertex (labeled $\llbracket {1},q_0,p',@\rrbracket$) can be reached from this vertex by traveling to vertex $1{6}$ since $D+1=5+1=6$,  the \nth{2} $\sigma$-special vertex   can be reached by traveling to vertex $1{4}$, the \nth{3} 
  can be reached by traveling to vertex $1{2}$, and the 
  \nth{4} 
    can be reached by traveling to vertex $1{9}$ since $D+N+1=5+3+1=9$.
 \\
  The compass entry $(\text{south},8, \text{south},\text{south})$ at vertex address $13$ tells us that the \nth{1} $\sigma$-special vertex 
   can be reached from this vertex by traveling down, 
   the \nth{2}  by traveling up to vertex $1{3}{8}$ (since $8=D+2$), and the \nth{3}
   by traveling down.
\label{fig:phaseOneA}}
\end{figure}

See Figure~\ref{fig:buildE} for an example of a tree $T_1$ constructed by a sequence of  transitions up to type 4.

These transitions 
ensure that 
a run of Phase One which starts in state $\aleph_{\text{start}}$ and ends in state $\alephEND$ 
must 
 operate as follows. Starting with the tree stack initiated at $(\{(\varepsilon,@)\},\varepsilon)$ and the TSA in state $\aleph_{\text{start}}$, the only possible transition is type (1), moving through states $\aleph_{\text{start},1}$ to  $\aleph_{\text{start},4}$ building a tree as in 
 \cref{fig:buildTEMP2}, ending in state $10^{N-1}1$ to indicate the first and last $\sigma$-special vertices have been added.
 After this $\APerm$  can non-deterministically perform transitions of type (2) and (3) to add new vertices (note by definition of a TSA, it cannot perform  $\push_\ell$ if the $\ell$-th child is already present) and (4) to move down to build new branches (not moving below the vertex with address 1, or up to any $\sigma$-special vertex already added).
 By construction, a state $\aleph \in \{0,1\}^{N+1}$  will have $1$ as its $i$-th digit if and only if an $i$-th $\sigma$-special vertex 
 has been added to the tree. Once a state has $1$ in its $i$-th coordinate, all subsequent states reached will have this property.    (Note there is no change to the state $\aleph$ for transitions of type (3) and (4) since no special vertex is added by these.) It follows that the tree constructed will have,  for each $i\in[1,N+1]$,  exactly one $i$-th $\sigma$-special  vertex at a node with address $\nu'(D+i)$ for some $\nu'\in N^*$.

 The compass element at each vertex
   encodes the relative locations of each $\sigma$-special vertex as follows:  if an $i$-th $\sigma$-special  vertex 
 is below the current vertex $\nu$, the compass element at $\nu$ has ``south'' as its $i$-th coordinate; 
 and if an $i$-th $\sigma$-special  vertex is located at or  above $\nu \ell$,  the compass element at $\nu$ has ``$\ell$'' as its $i$-th coordinate. Recall that $\sigma$-special vertices do not have a compass element (by construction, the location of $\sigma$-special vertices other than the current one is always ``south''). For an example, see vertices at addresses $1$ and $13$ in Figure~\ref{fig:phaseOneA}. 
 
  $\APerm$ can move to state $\alephEND$ and finish using a transition of type (5) if and only if   $\APerm$ is in state $1^{N+1}$ if and only if a tree stack $(T_1,1)$ has been built with exactly one  $i$-th $\sigma$-special label vertex
   for each $i\in[1,N+1]$ address 1 labeled by $(\mathfrak h,\mathfrak c, @)$ with $\mathfrak h\in H_{@,C}$, and  all other non-root vertices labeled by $(\mathfrak h,\mathfrak c, -)$ with $\mathfrak h\in H_{N,C}$.

 Since the only transitions used for  Phase One are $\push_j$, $\down$ and $\id$, each non-root vertex is visited from below exactly {once} in this phase.  
 For the remaining phases, $\APerm$ will not perform any more $\push_j$ transitions (so we will have $\deg(\APerm)=D+N+1$ by construction).  After Phase One is completed, the labels of the  $\sigma$-special vertices will not be altered by any  transition in the next two phases, and no state except the final $\aleph_{\sigma(1)}$ will be used again.

\subsection*{Subroutine A$[i]$}
We now describe,  for  $i\in [1,N]$,  a set of transitions  which can be employed as a subroutine to be used by Phase Two below.

For $i\in [1,N]$, the transitions of  subroutine A$[i]$ use the states  $\aleph_i$, $\beth_{i}$,  $
Q\times \{i\}\times [1,7]$  and $Q_{\text{spec}}'$.

Starting in state  $ \aleph_{i}$ with the pointer at address $1$, the following transitions enable us to  ``simulate''  a run of $\cA_N$ as it reads  the factor  $\#_{i}w_{i}\#_{i+1}$ of the word in \cref{Eq:acceptedWords}
and constructs/moves around a labelled tree $T_0$, but instead using the labeled tree $T_1$ built by Phase One,
with $\APerm$ finishing in state  $ \beth_{i}$
with the pointer back at address $1$.

\subsubsection*{1. (Locate $i$-th $\sigma$-special vertex)}  
For each $\ell\in[1,D+N+1]$, 
$\mathfrak c\in \circledcirc_{N,D}$ such that $\mathfrak c$ has $\ell$ in its $i$-th coordinate, 
$\mathfrak h\in H_{N,@}\sqcup H_{N,C}$,  
$c\in C_@\sqcup\{-\}$, 
we have the transition
\begin{equation}\label{transitionAi-1} (\aleph_i, \varepsilon, \equals((\mathfrak h,\mathfrak c, c)), \up_\ell, \aleph_i),\end{equation}
and for each 
 $\llbracket i,q,q',c\rrbracket\in C_{\text{spec}}$ we have the transition
\begin{equation}\label{transitionAi-2}
   (\aleph_i,\varepsilon, \equals( \llbracket i,q,q',c\rrbracket), \id, \llbracket i,q,q',c\rrbracket_2).\end{equation}

Following these transitions, the pointer moves up (following the $i$-th coordinate of the compass element of the vertex labels) until the pointer is at (the unique) $i$-th $\sigma$-special vertex of $T_1$, and $\APerm$ is in the state $\llbracket i,q,q',c\rrbracket_2\in Q_{\text{spec}}'$.

\subsubsection*{2. (Simulate $\mathcal A_N$ to read  $\#_i$)}  By construction (\cref{lem:InsertHashes}), 
 if $\cA_N$ is in state $ \llbracket i,q,q',c\rrbracket_2$, the pointer must be pointing to a vertex of $T_0$ labeled  $ \llbracket i,q,q',c\rrbracket$ which has address $\nu(D+i)$ for some $\nu\in \N^*$ such that $\nu$ has label $c$,
and there are unique transitions  $\tau_1(i,c),\tau_2(i,c), \tau_3(i,c)$ of $\cA_N$ to and from this state.

Thus, for every 
\bi\item transition
  $ \tau_3 (i,c) =(\llbracket i,q,q',c\rrbracket_2,\#_i,\equals(\llbracket i,q,q',c\rrbracket), \down, q')\in\delta_N$
\item 
$\mathfrak c\in \circledcirc_{N,D}$ with $i$ in its $(D+i)$-th coordinate
\item $d\in C_@, e\in C_@\cup\{-\}$
\item 
  $\mathfrak h\in H_{N,C}\sqcup H_{N,@}$ such that if $i>1$, the $(i-1)$-th and $i$-th columns are  $\begin{bmatrix} n_{i-1} & 0\\c & d\end{bmatrix}$ for some $n_{i-1}\in \{0,1,2\}$ (or for $i=1$,  the \nth{1} column is defined to be $ \begin{bmatrix}  1 \\@\end{bmatrix}$),
  \ei
we have transitions of $\APerm$:
\begin{equation}\label{transitionAi-3}
\begin{split}
&(\llbracket i,q,q',c\rrbracket_2,\#_i,\equals(\llbracket i,q,q',c\rrbracket), \down, \llbracket i,q,q',c\rrbracket_3)\\
&( \llbracket i,q,q',c\rrbracket_3, \epsilon,\equals((\mathfrak h, \mathfrak c,e)), \set((\mathfrak h', \mathfrak c,c)),  (q',i,1))\\
\end{split}\end{equation}
where $\mathfrak h'$ is   obtained from  $\mathfrak h$ by replacing its $i$-th coordinate by $\begin{bmatrix}1\\d\end{bmatrix}$.

 Note that for $i>1$, the second  transition overwrites the label $(\mathfrak h, \mathfrak c,e)$ of the vertex below the $i$-th $\sigma$-special label by $(\mathfrak h'', \mathfrak c,c)$ (and similarly for the third transition). This is because the $(i-1)$-th column of $\mathfrak h$ says that  in the tree 
  $T_0$ the corresponding vertex had the label $c$ after the run had finished reading $w_{i-1}\#_i$, 
   which will be verified when Subroutine $A(i-1)$ is run (which may be \emph{after} Subroutine $A(i)$ is run during Phase Two). The label $e$ is disregarded in these transitions.
  
After performing  these two transitions  $\APerm$ is in state $(q',i,1)$ with the pointer at  the vertex below the $i$-th $\sigma$-special node in $T_1$ with $c$ in the third coordinate of the vertex label, whereas the corresponding run of $\cA_N$ at this point of reading $\#_i$ has $\cA_N$ in state $q'$ with the pointer at  the vertex below the $i$-th special node in $T_0$ with $c$ as the vertex label.

\subsubsection*{2. (Simulate $\mathcal A_N$ to read  $w_i$)} In this step $\APerm$ will perform modified versions of the transitions of $\cA_N$, reading the third coordinate of the vertex label as if it were the label of the corresponding vertex in $T_0$.

 For each transition $\s=(q,x,p,f,q') \in \delta_N$  where $q\in Q, q'\in Q\cup\{\llbracket i+1,p,p',g\rrbracket_2\mid p,p'\in Q, g\in C_\varepsilon\}$ 
 and $p\in\{\true,\equals(e)\}$,
$x\in \Sigma_\varepsilon$, 
and each $(\mathfrak h,\mathfrak c,e)\in (H_{N,C}\sqcup H_{N,@})\times \circledcirc\times C_@$,
we have the following  transitions of $\APerm$ which will send $\APerm$ from state $(q,i,1)$ to  state $(q',i,1)$.
(Note that if $p=\equals (d)$ and the pointer is pointing to a vertex labelled $(\mathfrak h,\mathfrak c,e)$ with $e\neq d$, there is no transition.)

\be

\item If $f=\id$ we have the transition
\begin{equation}\label{transitionAi-4}
((q,i,1),x,p',\id, (q',i,1)).
\end{equation}  where $p'=\true $ if $p=\true$, and otherwise $p'=\equals((\mathfrak h,\mathfrak c,e))$.

 \item If $f=\set(c)$ we have transitions
\begin{equation}\label{transitionAi-5}
((q,i,1),x,\equals((\mathfrak h,\mathfrak c,d_1)),\set((\mathfrak h,\mathfrak c,c)), (q',i,1))\\
\end{equation}  
where $d_1=e$ if $p=\equals(e)$ and $d_1\in C$ if $p=\true$.

\item If $f=\down$, 
we have one set of transitions to adjust the source vertex label, and a second set of transitions to adjust the target vertex label. 

When performing a down transition, we first need to non-deterministically decide whether this is the last time the pointer is at the vertex it is leaving whilst reading $w_i\#_{i+1}$ or not.
We do this with the following transitions:
\begin{equation}\label{transitionAi-6}\begin{split}
((q,i,1),x,\equals((\mathfrak h,\mathfrak c,d_1)),\down, (q,i,2))\\
((q,i,1),x,\equals((\mathfrak h,\mathfrak c,d_1)),\set((\mathfrak h',\mathfrak c,e)), (q,i,3))\\
( (q,i,3),\epsilon,\true,\down, (q,i,2))\end{split}
\end{equation}  
where $d_1=e$ if $p=\equals(e)$ and $d_1\in C$ if $p=\true$, and
$\mathfrak h'$ is obtained from $\mathfrak h$ by replacing column $i$ of $\mathfrak h$ which is $\begin{bmatrix} 1\\ b_1\end{bmatrix}$
by $\begin{bmatrix} 2\\ b_1\end{bmatrix}$. Note that the entry in row 1 column $i$ must be 1 since the pointer has reaches this vertex whilst reading $\#_iw_i$.

For a vertex $\nu\in T_0$ to be below a vertex that has been visited by the pointer whilst $\cA_N$ is reading $w_i$, $\nu$ is non-special and must already have a label $d\in C_@$. Thus $1\nu\in T_1$ will have a label $(\mathfrak h',\mathfrak c',d)$. 
If this is the first time the pointer of $\cA_N$ has visited $\nu$ whilst reading $w_i$, then  $\mathfrak h'$ will have $0$ in its $i$-th column, and $i>1$ since the pointer cannot have visited previously, so  in order to know what the label of $\nu$ is after finishing reading $w_{i-1}$, we can look at the $(i-1)$-th column of $\mathfrak h'$ to find out.

We realise these considerations with the following transitions. For all $d\in C_@$ and all $\mathfrak h'\in H_{N,C}\sqcup H_{N,@}$ whose $i$-th column is  $\begin{bmatrix} 1\\ b\end{bmatrix}$ for $b\in C_@$, we have 
\begin{equation}\label{transitionAi-7}
(q,i,2),\epsilon,\equals((\mathfrak h',\mathfrak c',d)),\id, (q',i,1))
\end{equation}  
This means we do not modify the target vertex label since we have already visited it whilst reading $w_i$ and its label is ``correct''.

For all $d\in C_@$ and all  $\mathfrak h'\in H_{N,C}\sqcup H_{N,@}$ whose $(i-1)$-th and $i$-th columns are   $\begin{bmatrix} n_{i-1}&0\\ b_1 & b_2\end{bmatrix}$, we have 

\begin{equation}\label{transitionAi-8}
(q,i,2),\epsilon,\equals((\mathfrak h',\mathfrak c',d)),\set((\mathfrak h',\mathfrak c',b_1)),\down, (q',i,1))
\end{equation}  where $\mathfrak h''$ is obtained from $\mathfrak h'$ by replacing its $i$-th column by $\begin{bmatrix} 1\\ b_2\end{bmatrix}$.

\item If $f=\up_n$, we again need to decide whether this is the last time the pointer visits the source node whilst reading $w_i$, so we have 
transitions:
\begin{equation}\label{transitionAi-9}\begin{split}
((q,i,1),x,\equals((\mathfrak h,\mathfrak c,d_1)),\up_n, (q,i,4))\\
((q,i,1),x,\equals((\mathfrak h,\mathfrak c,d_1)),\set((\mathfrak h',\mathfrak c,e)), (q,i,5))\\
( (q,i,5),\epsilon,\true,\up_n, (q,i,4))\end{split}
\end{equation}  
where $d_1=e$ if $p=\equals(e)$ and $d\in C$ if $p=\true$, and 
 $\mathfrak h'$ is obtained from $\mathfrak h$ by replacing column $i$ of $\mathfrak h$ which is $\begin{bmatrix} 1\\ b_1\end{bmatrix}$ 
by $\begin{bmatrix} 2\\ b_1\end{bmatrix}$. 

Next we check column $i$ of the first coordinate of the vertex label at the vertex the pointer moves to. Note that we can only  reach a special vertex in $T_0$ with a $\push_n$ transition, so the label of the target vertex in $T_1$ here is 
$(\mathfrak h'',\mathfrak c'',b_1)$. 
For all $\mathfrak h''\in H_{N,C}$ such that column $i$  of $\mathfrak h''$ is $\begin{bmatrix} 1\\ b_2\end{bmatrix}$, and all $b_1\in C$ we have
 the following transitions.
\begin{equation}\label{transitionAi-10}
( (q,i,4),\epsilon,\equals((\mathfrak h'',\mathfrak c'',b_1)),\id, (q',i,1)).
\end{equation}

\item If $f=\push_n(g_1)$ for $g_1\in C\sqcup C_{\text{spec}}$,
we again need to decide whether this is the last time the pointer visits the source node whilst reading $w_i$, so we have 
transitions:
\begin{equation}\label{transitionAi-9}\begin{split}
((q,i,1),x,\equals((\mathfrak h,\mathfrak c,d_1)),\up_n, (q,i,6))\\
((q,i,1),x,\equals((\mathfrak h,\mathfrak c,d_1)),\set((\mathfrak h',\mathfrak c,e)), (q,i,7))\\
( (q,i,7),\epsilon,\true,\up_n, (q,i,6))\end{split}
\end{equation}  
where $d_1=e$ if $p=\equals(e)$ and $d_1\in C$ if $p=\true$, and 
 $\mathfrak h'$ is obtained from $\mathfrak h$ by replacing column $i$ of $\mathfrak h$ which is $\begin{bmatrix} 1\\ b_1\end{bmatrix}$ 
by $\begin{bmatrix} 2\\ b_1\end{bmatrix}$.

Next we must check that the target vertex has not been visited before whilst reading  $w_i$ or  any previous factor $w_j$ with $j<i$.
For all $g_1\in  C$ and all $(\mathfrak h',\mathfrak c',-)\in H_{N,C}$  and such that the $i$-th column of $\mathfrak h'$ is $\begin{bmatrix} 0\\g_1\end{bmatrix}$  and   if $i>1$, the $(i-1)$-th column of $\mathfrak h'$ is $\begin{bmatrix} \%\\\%\end{bmatrix}$,
we have transitions:
\begin{equation}\label{transitionAi-4}
((q,i,6),\epsilon,\equals((\mathfrak h',\mathfrak c',-)), \set((\mathfrak h'',\mathfrak c',g_1)),(q',i,1))
\end{equation} where $\mathfrak h''$ is obtained from $\mathfrak h'$ by replacing its $i$-th column by $\begin{bmatrix} 1\\g_2\end{bmatrix}$.

In addition we have these transitions for the case that the target vertex is $\sigma$-special, which is the case when $g_1=(\llbracket i+1,p,p',g\rrbracket)$.
For all $p,p''\in Q, g\in C_\varepsilon$
we have transitions:
\begin{equation}\label{transitionAi-4}
((q,i,6),\epsilon,\equals(\llbracket i+1,p,p',g\rrbracket), \id,\llbracket i+1,p,p',g\rrbracket_2)
\end{equation}
which, if followed, means that 
$\APerm$ has  successfully simulated $\cA_N$ reading $\#_iw_{i}$ moving from the $i$-th to the $(i+1)$-th special vertices of $T_0$ (resp. $\sigma$-special vertices of $T_1$).

 \ee

\subsubsection*{3. (End of reading $w_i\#_{i+1}$)}  
By construction (\cref{lem:InsertHashes}), 
 if $\cA_N$ is in state $ \llbracket i+1,p,p',g\rrbracket_2$, 
the last transition it performed must have been 
  $ \tau_2 (i+1,g)= (\llbracket i,p,p',g\rrbracket_1,\epsilon, \equals(g),\push_{D+i+1}( \llbracket i+1,p,p',g\rrbracket), \llbracket i+1,p,p',g\rrbracket_2)$
  followed by $ \tau_3 (i+1,g)$ which reads $\#_{i+1}$.

Thus, for every \bi\item transition
  $ \tau_2 (i+1,g)= (\llbracket i+1,p,p',g\rrbracket_1,\epsilon, \equals(g),\push_{D+i+1}( \llbracket i+1,p,p',g\rrbracket), \llbracket i+1,p,p',g\rrbracket_2)$\item  $\mathfrak n\in H_{N,C}\times \circledcirc\times C_-$, \item $\mathfrak m\in H_{N,@}\times \circledcirc\times \{@\}$, \ei
we have the following transitions of $\APerm$:
\begin{equation}\label{transitionAi3-1}\begin{split}
&(\llbracket i+1,p,p',g\rrbracket_2, \#_{i+1}, \equals(\llbracket i+1,p,p',g\rrbracket),  \id,  \llbracket i+1,p,p',g\rrbracket_4)\\
&( \llbracket  i+1,p,p',g\rrbracket_4, \epsilon,  \equals(\mathfrak n),  \down,  \llbracket  i+1,p,p',g\rrbracket_4)\\
&( \llbracket  i+1,p,p',g\rrbracket_4, \epsilon,  \equals(\mathfrak m), \id, \beth_{i}),\end{split}\end{equation}
which move the pointer to the vertex at address $1$ and $\APerm$ into state $\beth_{i}$.

By construction, the transitions described forming Subroutine $A(i)$ are the only possible to be performed in the order described, and it follows that $\APerm$ reads $\#_iw_i\#_{i+1}$ starting from $\aleph_i$ and ending at $\beth_{i}$ with the pointer starting and ending at the vertex at position $1$.

\subsection*{Phase Two}

Phase Two applies the transitions from the above subroutine  as follows.  For $i\in [1,N-1]$,  $\APerm$ has the following transitions:
\begin{equation}\label{transitionPhase2}
 \tau[i]=
 (\beth_{\sigma(i)}, \epsilon, \true,\id, \aleph_{\sigma(i+1)}).\end{equation}

Since  Phase One ends with $\APerm$ in state $ \aleph_{\sigma(1)}$ with the pointer at address $1$, the only possible transitions from here are from Subroutine A$[\sigma(1)]$, which means $\APerm$  reads $\#_{\sigma(1)}w_{\sigma(1)}\#_{\sigma(1)+1}$, with the pointer returned to the address $1$, leaving $\APerm$ in state $ \beth_{\sigma(1)}$. 
From here, the only possible transition is $\tau[1]$, which puts $\APerm$ into state $\aleph_{\sigma(2)}$, then from here the only possible  transitions 
are from Subroutine A$[\sigma(2)]$, then $\tau[2]$, and so on until the entire word in \cref{Eq:acceptedWords} has been read, the pointer is at address $1$, and $\APerm$ is in state $ \beth_{\sigma(N)}$.

At the end of Phase Two, we have verified that a run of $\cA_N$ exists starting at the special vertex for $\#_1$ to the special vertex reading $\#_{N+1}$, 
so this final step completes the verification that Property~\ref{SENTENCE-REF} holds provided the following final check is performed.

\subsection*{Phase Three} This phase starts in state $\beth_{\sigma(N)}$.
We wish to scan the labelled tree $T_1$  produced at the end of after Phase Two using a depth-first search to check that the first row of the first coordinate $\mathfrak h\in H_{N,C}\sqcup H_{N,@}$ of the label of each non-root, non-special vertex all have  $n_{j}\in\{2,0\}$ 
 (with $0$ indicating that the pointer didn't visit this vertex when reading $w_{i}$),
to ensure that  the last label of every vertex before reading $\#_j$ matches the operation of $\mathcal A_N$ when first entering the same vertex after reading $\#_j$ for each $j\in[1,N]$. That is, if  some label has $\mathfrak h$ with $\begin{bmatrix}1\\c\end{bmatrix}$ as its $j$-th column, we have not checked that the label of this vertex is $c$ the last time the pointer was at this vertex, and so the assumption that it was $c$ in Subroutine $A(j+1)$ may be incorrect, so this Phase will not allow any such transitions. It also will check that the first $n_j$ after a $\%$ entry is $2$, to ensure that the vertex was constructed by a $\push_n$ transition.

Since the tree $T_1$ has out-degree at most $D+N+1$, 
it is clear that, using  some additional states $Q_{\text{Phase3}}$ and writing $\Box$ at each vertex after it has been checked, this depth-first search can be implemented by finitely many transitions, and we omit the details. (To comply with \cref{rmk:finish-root}, the last transition should return the pointer to the root of $T_1$.)

This  phase ensures that $\APerm$ using the tree $T_1$ guessed in Phase One has correctly simulated $\mathcal A_N$ constructing final tree $T_0$ in Phase Two, starting from the start state of $\mathcal A_N$ with the pointer at the root of $T_0$, ending at an accept state  of $\mathcal A_N$ with the pointer at the root of $T_0$, and each vertex has been entered and exited from above and below (with the correct vertex label exiting the last time before reading $\#_i$ for each $i\in[2,\dots, N+1]$) exactly as guessed in Phase One. This proves  completeness and soundness.

Let $T_1$ be the final  tree of a run of $\APerm$.
The vertex of $\overline{T_1}$ with address $1$ is visited from below exactly once during the entire process (during Phase One; after this the pointer never drops below $1$ until the end of Phase Three). 
Phase One visits every vertex $1\nu$ of the tree $\overline{T_1}$ from below exactly once.
For $i\in[1,N]$, Subroutine $A(i)$ 
moves the pointer up from address $1$ to an $i$-th $\sigma$-special vertex, then simulates $\cA_N$ until it moves up to an  $i$-th $\sigma$-special vertex,  and then at the end of Subroutine $A(i)$ the pointer goes down to the root. Thus in total for Phase Two, 
a  $j$-th $\sigma$-special vertex is visited from below exactly once for $j\in\{1,N+1\}$ and twice for $j\in [2,N]$.
A non-special vertex that is not the root of $T_1$ is visited once for each application of Subrountine $A(i)$, so $N$ times, plus $k$ in total as each subroutine simulates $\cA_N$ building $T_0$ where each vertex of $T_0$ is visited at most $k$ times in total by hypothesis. Thus for Phase Two 
 the number of times a vertex is visited from below in Phase Two is  most $(N+1)+k$ since $N\geq 1$ so $N+1+k\geq 2$.
 Phase Three performs a depth-first search of $T_1$ starting at the address labeled $1$, which means it visits each  vertex  $1\nu$ of $T_1$ from below exactly once.

Thus in total all phases visit each vertex of $T_1$ from below at most \[1+(k+N+1)+1=k+N+3\]
 times, which shows that $\APerm$ is $(\constPerm)$-restricted.
\end{proof}

\section{Conclusion and outlook}

We have shown that $C^N$ of a $k$-\MCF\ language is $(\constPerm)$-\MCF. We do not know if the bound is sharp; for example, for $k=1$ and $N=3$ we have shown $C^3$ of a context-free language is a $7$-\MCF, whereas the lower bound is 2  since context-free languages are not closed under $C^3$ by \cite{Brandst}.

Our result adds the class of \MCF\ languages to the collection of formal languages classes which are closed under the operator $C^N$.

\section*{Acknowledgements}

 The first author was supported 
 by The Leverhulme Trust, Research Project Grant RPG-2022-025. The second and forth author were supported by an 
Australian Research Council grant DP210100271. The second author received support from the London Mathematical Society
Visiting Speakers to the UK--Scheme 2.  

\bibliographystyle{plainurl}
\bibliography{refs}

\end{document}